    \newtheorem*{theorem*}{Theorem}
    \newtheorem{theorem}{Theorem}[section]
    \newtheorem{lemma}{Lemma}[section]
    \newtheorem{conjecture}{Conjecture}
    \newcommand{\graphicsScale}{1.}
    \newcolumntype{C}{@{}>{\centering\arraybackslash}X@{}}
    \newlist{thmEnumerate}{enumerate}{1}
    \setlist[thmEnumerate, 1]{label=(\alph{thmEnumeratei})}
    \newlist{Cases}{enumerate}{5}
    \setlist[Cases]{wide}
    \setlist[Cases, 1]{label=\textbf{Case~\arabic{Casesi}:},ref=\arabic*}
    \setlist[Cases, 2]{label=\textbf{Case~\arabic{Casesi}.\arabic{Casesii}:},ref=\arabic*}
    \setlist[Cases, 3]{label=\textbf{Case~\arabic{Casesi}.\arabic{Casesii}.\arabic{Casesiii}:},ref=\arabic*}
    \setlist[Cases, 4]{label=\textbf{Case~\arabic{Casesi}.\arabic{Casesii}.\arabic{Casesiii}.\arabic{Casesiv}:},ref=\arabic*}
    \setlist[Cases, 5]{label=\textbf{Case~\arabic{Casesi}.\arabic{Casesii}.\arabic{Casesiii}.\arabic{Casesiv}.\arabic{Casesv}:},ref=\arabic*}
    \newlist{inlineEnum}{enumerate*}{1}
    \setlist[inlineEnum, 1]{label=(\roman{inlineEnumi})}
\newcommand{\FlipGraph}[1]{\mathcal{G}(#1)} 
\newcommand{\Vertices}[1]{V(#1)} 
\newcommand{\Edges}[1]{E(#1)} 
\newcommand{\OO}[1]{O(#1)} 
\newcommand{\mya}{a} 
\title{Further Connectivity Results on\\ Plane Spanning Path Reconfiguration}
\author{
  Valentino Boucard\thanks{Aix-Marseille Université and LIS, France. The work of Guilherme D. da Fonseca is supported by the French ANR PRC grant ADDS (ANR-19-CE48-0005).} 
  \and
  Guilherme D. da Fonseca\textsuperscript{*}  
  \and 
  Bastien Rivier\thanks{Brock University, Canada.}
  }
\date{}
\begin{document}

\maketitle

\begin{abstract}
Given a finite set $ S $ of points, we consider the following reconfiguration graph. The vertices are the plane spanning paths of $ S $ and there is an edge between two vertices if the two corresponding paths differ by two edges (one removed, one added). Since 2007, this graph is conjectured to be connected but no proof has been found. In this paper, we prove several results to support the conjecture. Mainly, we show that if all but one point of $ S $ are in convex position, then the graph is connected with diameter at most $ 2 | S | $
and that for $ | S | \geq 3 $ every connected component has at least $ 3 $ vertices.
\end{abstract}

\section{Introduction}

Reconfiguration problems consider the sequence of operations used to convert a certain \emph{configuration} into another and can be formalized using the following (possibly directed) graph called the \emph{flip graph}. The vertices of the graph represent the configurations in question. The edges are defined by a \emph{flip} operation that locally changes a configuration, producing a new one. The most fundamental question is whether the corresponding graph is connected, that is, whether, for any two configurations, we are able to reconfigure one into the other using a sequence of flips.

In this paper, we are interested in the connectivity of the undirected flip graph $ \FlipGraph { S } $ defined hereafter.
Let $ S $ be a set of $ n \geq 3 $ points in general position (namely, no three points are colinear).
We define the \emph{flip graph} of $ S $, denoted $ \FlipGraph { S } $, as the following undirected simple graph.
The vertices $ \Vertices { \FlipGraph { S } } $, called \emph{plane paths}, are the spanning paths drawn in the plane with non-crossing straight line segments.
The edges $ \Edges {  \FlipGraph { S } } $, called \emph{flips}, consist of pairs of plane paths whose symmetric difference is made of exactly two segments (see Figure~\ref{fig:Demo}). We refer to a path in $ \FlipGraph { S } $ as a \emph{flip path}.

\begin{figure}[htb]
    \centering
    \begin{tabularx}{\textwidth}{C}
        \includegraphics[scale=\graphicsScale]{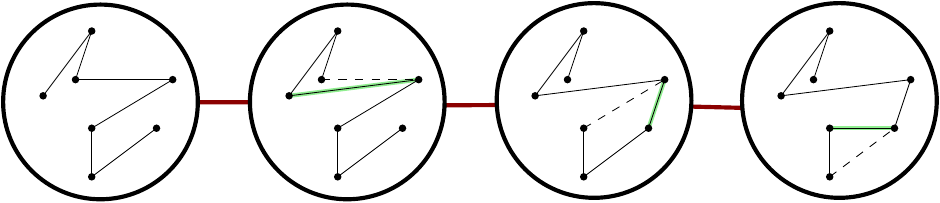}
    \end{tabularx}
\caption{Example of a flip path in $\FlipGraph{S}$. From left to right, the removed segment is dashed and the inserted segment is highlighted.}
 \label{fig:Demo}
\end{figure}

The main open problem about $ \FlipGraph { S } $ is to resolve the following conjecture, first proposed in~\cite{On}, and further studied in~\cite{Flipping}.

\begin{conjecture}[\cite{On}]
\label{cjt:main}
    For every point set $ S $ in general position, the flip graph $ \FlipGraph { S } $ is connected.
\end{conjecture}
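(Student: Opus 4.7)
The plan is to attempt an induction on $n = |S|$, with small base cases handled directly. For the inductive step I would pick a distinguished point $p$ on the convex hull of $S$ and try to show that every plane spanning path can be flip-transformed into one in which $p$ is a pendant leaf attached to a prescribed neighbor $q \in S \setminus \{p\}$. Granting such a normalization, removing $p$ reduces the reconfiguration problem to a flip reconfiguration on $S \setminus \{p\}$, to which the inductive hypothesis applies: flips internal to the subpath on $S \setminus \{p\}$ lift to legal flips on the whole path, since the pendant edge $pq$ is never touched.

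To make $p$ a leaf when $p$ currently has two neighbors $u, v$ in the path $P$, the natural tool is a \emph{rotation flip} at $p$: delete one of $pu, pv$ and add an edge that re-joins the two resulting subpaths so that $p$ becomes an endpoint. Because $p$ is a vertex of the convex hull, an angular sweep of rays from $p$ should produce at least one candidate flip whose new edge does not cross the rest of $P$. If no single flip suffices, I would look for a sequence of preparatory flips that first clears obstructing edges, monitored by a monotone potential such as the number of convex-hull edges used by the path. Forcing the neighbor of $p$ to be the specific point $q$ afterwards could hopefully be achieved by invoking the paper's connectivity result for point sets in which all but one point is in convex position, since once $p$ is a pendant leaf the remaining configuration is structurally close to that setting.

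The main obstacle — and presumably the reason the conjecture has resisted proof since 2007 — is the normalization step itself. Flips are extremely rigid: a single edge must be replaced while simultaneously preserving planarity and the spanning-path property. In configurations with many interior points clustered behind $p$, it is far from evident that any rotation at $p$ yields a non-crossing replacement, nor that preparatory flips exist to clear the way. A successful attack would likely require either a global potential function whose optimum forces $p$ to be a leaf, or a delicate combinatorial case analysis on the local arrangement around $p$; the fact that neither has yet been carried through is what leaves the conjecture open, so the realistic ambition of the present paper is only to chip away at it with the partial results announced in the abstract.
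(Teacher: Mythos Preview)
The statement you were asked to prove is Conjecture~\ref{cjt:main}, and the paper does \emph{not} prove it: it is explicitly presented as the main open problem, and the paper's contribution is limited to the partial results of Theorem~\ref{thm:1InOrOut} (one point outside a convex set) and the lower bound of $3$ on the size of any connected component. There is therefore no paper proof to compare your proposal against.

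Your outline is a reasonable heuristic, and you are right to flag the normalization step as the genuine gap. Two concrete issues are worth naming. First, the lifting claim ``flips internal to the subpath on $S\setminus\{p\}$ lift to legal flips on the whole path, since the pendant edge $pq$ is never touched'' is only safe if $pq$ is a convex-hull edge of $S$; otherwise a flip on the subpath may well introduce a segment that crosses $pq$ even though $pq$ is not removed. So $q$ must be a convex-hull neighbour of $p$, which sharpens (and hardens) the normalization task. Second, the suggestion to invoke the paper's ``all but one in convex position'' result to force the neighbour of $p$ to be $q$ does not apply in general, since after deleting $p$ the set $S\setminus\{p\}$ need not have only one non-convex point. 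Your final paragraph correctly diagnoses that the obstacle is exactly this normalization, and that the paper's realistic ambition is the partial results; that assessment matches the paper.
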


The following theorems prove Conjecture~\ref{cjt:main} for special classes of point sets.

\begin{theorem}[\cite{On}]
    \label{thm:convex}
    For every point set $ S $ in convex position, the flip graph $ \FlipGraph { S } $ is connected with diameter at most $ 2 n - 6 $.
\end{theorem}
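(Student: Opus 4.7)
The plan is to prove the stronger inductive statement that every plane spanning path $P$ on $n$ convex points lies at flip-distance at most $n-3$ from the canonical ``hull'' path $C = p_1 p_2 \cdots p_n$, where $p_1,\ldots,p_n$ is the convex hull order. The diameter bound will then follow from the triangle inequality: for any two plane spanning paths $P$ and $Q$, one has $d(P,Q) \le d(P,C) + d(C,Q) \le 2(n-3) = 2n-6$.

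For the inductive step, I aim to ``peel off'' the hull vertex $p_n$ and recurse on the flip graph of $n-1$ convex points. Concretely, I plan to establish the following key reduction: from any plane spanning path $P$, a single flip (possibly none, if $P$ already has the desired form) produces a plane spanning path $P^\star$ in which $p_n$ is a pendant leaf attached to $p_{n-1}$ via the hull edge $p_{n-1}p_n$. Deleting $p_n$ from $P^\star$ then yields a plane spanning path on the $n-1$ convex points $\{p_1,\ldots,p_{n-1}\}$, which by the inductive hypothesis can be brought to $p_1 p_2 \cdots p_{n-1}$ in at most $n-4$ flips. Each such flip avoids $p_n$ entirely, so it lifts to a flip on $P^\star$ preserving the pendant edge $p_{n-1}p_n$. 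Altogether, at most $1 + (n-4) = n-3$ flips transform $P$ into $C$.

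The proof of the reduction will split according to the role of $p_n$ in $P$. If $p_n$ is already a leaf of $P$ attached to $p_{n-1}$, nothing is needed. If $p_n$ is a leaf with a different neighbor $v$, I will locate an interior edge of $P$ incident to $p_{n-1}$ whose removal, paired with the insertion of $p_{n-1}p_n$, yields a plane spanning path in which $p_n$ is now a leaf adjacent to $p_{n-1}$. If instead $p_n$ is internal with path-neighbors $u$ and $v$, I will remove one of the edges incident to $p_n$ (say $p_n v$), detaching $p_n$ into a singleton subpath, and insert a non-crossing edge between the two resulting subpaths such that the new path has $p_n$ as a leaf attached to $p_{n-1}$. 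The existence of a valid insertion will be argued from the convex position of $S$.

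The main obstacle I anticipate is verifying planarity of the resulting path in the key reduction. The hull edge $p_{n-1}p_n$ itself is crossing-free with any other edge of any plane path on $S$, so inserting it is always geometrically safe; the delicate point is arguing that the complementary removed edge can always be chosen so that, after the insertion, no two surviving edges cross. I expect this to follow from a visibility argument exploiting the combinatorial rigidity of plane spanning paths in convex position, namely the fact that the unvisited vertices at any step of such a path form a contiguous arc on the hull, which tightly constrains which flips preserve planarity and guarantees the desired one always exists.
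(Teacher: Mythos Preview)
Your plan has a genuine gap: the strengthened inductive statement --- that every plane spanning path lies within $n-3$ flips of the \emph{fixed} hull path $C=p_1\cdots p_n$ --- is already false for $n=4$. With hull order $p_1,p_2,p_3,p_4$, the plane path $p_2,p_3,p_1,p_4$ and $C$ share only the edge $p_2p_3$, so their symmetric difference has four segments and they are not adjacent; hence their distance is at least $2>n-3=1$. The one-flip reduction fails for the same underlying reason. For $n=5$ and the plane path $p_2,p_1,p_5,p_3,p_4$, the vertex $p_5$ is internal with path-neighbours $p_1$ and $p_3$, and an exhaustive check of all flips shows that $p_5$ can only become a leaf attached to $p_1$, never to $p_4$, after a single flip. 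Your own case split already reveals why: when $p_n$ is internal, removing one incident edge leaves $p_n$ a leaf attached to its \emph{other} original neighbour (not a ``singleton subpath''), and that neighbour need not be $p_{n-1}$; and when $p_n$ is a leaf with neighbour $v\neq p_{n-1}$, removing an edge at $p_{n-1}$ and inserting $p_{n-1}p_n$ raises the degree of $p_n$ to~$2$, so it is no longer a leaf.

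The argument from~\cite{On} that the paper recalls in Lemma~\ref{lem:convex} avoids this by not fixing the target hull path in advance. Writing the given path as $q_1,\dots,q_n$ in path order, it repeatedly replaces the first non-hull edge $q_iq_{i+1}$ by $q_1q_{i+1}$; this flip is plane precisely because the prefix $q_1,\dots,q_i$ already forms a contiguous hull arc. In at most $n-3$ flips this reaches the hull path sharing the endpoint $q_n$ and the (necessarily hull) edge $q_{n-1}q_n$ with the original path --- a target that depends on $P$, not a prescribed one.
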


\begin{theorem}[\cite{Flipping}]
    \label{thm:1In}
    For every point set $ S = C \cup \{ \xi  \} $ where $ C $ is in convex position and $ \xi  $ lies inside the convex hull of $ C $, the flip graph $ \FlipGraph { S } $ is connected with diameter at most $ 2 n - 4 $.
\end{theorem}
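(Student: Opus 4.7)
The plan is to exhibit a canonical plane spanning path $P^\ast$ of $S$ and prove that every plane spanning path of $S$ can be transformed into $P^\ast$ by a sequence of at most $n-2$ flips; by the triangle inequality this yields the diameter bound $2(n-2) = 2n - 4$. Let $c_1, c_2, \ldots, c_{n-1}$ be the points of $C$ in convex position order. I would take $P^\ast = c_1 c_2 \cdots c_{n-1} \xi$; this is plane because the segments $c_i c_{i+1}$ trace the convex hull of $C$ minus one edge, while the final segment $c_{n-1}\xi$ lies inside $\mathrm{conv}(C)$ and therefore crosses none of those hull edges.

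The transformation proceeds in two phases. In the first phase, if $\xi$ is an interior vertex of $P$ with neighbors $a, b \in C$, a single flip should suffice to make $\xi$ an endpoint: removing $a\xi$ splits $P$ into two subpaths, and among the few possible reconnections by a single added edge I would argue that at least one yields a plane path with $\deg_{P'}\xi = 1$. The geometric leverage here is that $\xi$, being strictly interior to $\mathrm{conv}(C)$, is radially visible from every point of $C$, so many candidate edges incident to $\xi$ are available and the non-crossing constraint relative to the remaining path edges can be satisfied by a judicious choice among the three flip options.

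In the second phase, $\xi$ sits at an endpoint of the current path and the remaining vertices form a plane spanning path $P_C$ of $C$. I would reconfigure $P_C$ to the boundary path $c_1 \cdots c_{n-1}$ and simultaneously relocate $\xi$'s attachment to $c_{n-1}$ using at most $n-3$ further flips. Morally this invokes Theorem~\ref{thm:convex}, which gives distance at most $(n-1) - 3 = n - 4$ from any $C$-path to a fixed canonical one, plus a single extra flip to adjust $\xi$'s attachment point; combined with the $\leq 1$ flip from phase~1 this yields the desired bound of $n-2$.

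The main obstacle is the second phase: the flip sequence supplied by Theorem~\ref{thm:convex} may insert new edges on $C$ that cross $\xi$'s current attachment edge, so the theorem cannot be applied as a black box within $\FlipGraph{S}$. I would address this either by proving a strengthened canonical version of Theorem~\ref{thm:convex} in which every intermediate inserted edge stays on a prescribed side of a chosen chord of $\mathrm{conv}(C)$, or by tailoring the canonical $C$-path to the current position of $\xi$ (for instance by rotating the starting vertex $c_1$ so that $\xi$'s attachment lies safely at the end of the canonical boundary traversal). A plausible alternative is induction on $n$: peel off the hull vertex that is endpoint opposite to $\xi$, reduce to a smaller instance, and verify that the removed vertex still leaves $\xi$ inside the smaller convex hull and that a single flip per inductive step bridges the two canonical paths.
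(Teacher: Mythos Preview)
This theorem is not proved in the present paper: it is quoted verbatim from~\cite{Flipping} as a known result, so there is no in-paper proof to compare against. What the paper \emph{does} prove is the companion statement where $\xi$ lies outside $\mathrm{conv}(C)$ (Theorem~\ref{thm:1InOrOut}), and it is worth noting that the method used there is structurally different from your plan: rather than first pushing $\xi$ to an endpoint and then invoking Theorem~\ref{thm:convex} on $C$, the paper keeps the degree of $\xi$ fixed throughout and applies the convex lemma (Lemma~\ref{lem:convex}) separately to the two sub-paths on either side of $\xi$, whose convex hulls are disjoint from each other and from the edges incident to $\xi$. That decomposition is exactly what sidesteps the obstacle you flag in Phase~2.

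Regarding your proposal itself, there are two real gaps. Your Phase~1 claim that a single flip always makes $\xi$ an endpoint is not justified: after deleting $a\xi$ the only reconnections that leave $\deg(\xi)=1$ are the edges $aq$ and $pq$ (where $p,q$ are the far endpoints of the two pieces), and nothing in your argument rules out both of them crossing the remaining path. The remark that $\xi$ is ``radially visible from every point of $C$'' is true but irrelevant, since the candidate edges you need are between two points of $C$, not incident to $\xi$. Your Phase~2 discussion honestly identifies the crossing problem but does not resolve it; each of the three suggested fixes (a side-restricted version of Theorem~\ref{thm:convex}, rotating the canonical labelling, or induction by peeling a hull vertex) would require its own argument, and for the inductive route the hypothesis that $\xi$ remains strictly interior after removing a hull vertex can fail. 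As written, the proposal is a reasonable outline but not yet a proof, and the flip budget $n-2$ is not established.
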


\begin{theorem}[\cite{Flipping}]
    \label{thm:doubleCircle}
    For every point set $ S $ in \emph{generalized double circle position} (a type of deformation of convex position), the flip graph $ \FlipGraph { S } $ is connected with diameter at most $ \OO { n ^ 2 } $.
\end{theorem}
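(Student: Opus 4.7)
The plan is to reduce an arbitrary plane spanning path to a canonical plane spanning path by induction on $n$, spending $O(n)$ flips per inductive step to yield the overall $O(n^2)$ diameter bound.

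First, I would exploit the structure of the generalized double circle, which I expect to consist of an outer convex polygon together with interior points each placed ``close'' to some edge of the outer polygon, so that each interior point only interacts with a bounded local neighborhood. I would fix a canonical target path $P^\star$, obtained by starting at a designated convex hull vertex, traversing the boundary in cyclic order, and detouring briefly to each interior point as its associated outer edge is crossed. The goal becomes to show that any plane spanning path can reach $P^\star$ in $O(n^2)$ flips; connectivity and the claimed diameter then follow by the triangle inequality in $\FlipGraph{S}$.

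Second, I would prove a \emph{relocation lemma}: for any plane spanning path $P$ and any designated point $p \in S$, there is a sequence of $O(n)$ flips transforming $P$ into a plane spanning path $P'$ whose restriction to a neighborhood of $p$ matches that of $P^\star$, while not disturbing the portion of $P$ already fixed in previous inductive steps. Applied iteratively, peeling off one point at a time (processing interior points first, then convex hull vertices) and finishing by invoking Theorem~\ref{thm:convex} or Theorem~\ref{thm:1In} once the residual subconfiguration is convex or has a single interior point, this lemma gives the $O(n^2)$ total flip count.

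The main obstacle, and the step that presumably distinguishes the generalized double circle from the easier cases, is establishing the relocation lemma in the presence of several nearby interior points. Since each interior point sits very close to a convex hull edge, the ``corridor'' available for rerouting a path subsegment through it is narrow, and naive flips may introduce crossings with path segments threading neighboring corridors. The crux will be to choose a good order in which to visit the interior points, likely matching the cyclic order along the boundary, so that relocating the current target point only requires manipulating edges incident to a constant-size local region around it. The remaining combinatorics of how a path can enter and leave such a local region should then be amenable to a finite case analysis, or reducible to the convex and one-point-interior cases already covered by Theorems~\ref{thm:convex} and~\ref{thm:1In}.
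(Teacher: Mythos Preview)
The paper does not contain a proof of Theorem~\ref{thm:doubleCircle}: the statement is quoted from~\cite{Flipping} as background, alongside Theorems~\ref{thm:convex} and~\ref{thm:1In}, and no argument for it appears anywhere in the text. There is therefore nothing in this paper to compare your proposal against.

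As for the proposal itself, it is a plausible high-level outline but not a proof. Everything hinges on the \emph{relocation lemma}, which you state but do not prove; you acknowledge that the ``main obstacle'' is precisely establishing it, and you defer that to an unspecified ``finite case analysis'' or reduction to Theorems~\ref{thm:convex} and~\ref{thm:1In}. Without that lemma actually established, the inductive scheme has no content. Moreover, the claim that relocating one point can be done while ``not disturbing the portion of $P$ already fixed in previous inductive steps'' is exactly the delicate part in these arguments, and nothing in the outline explains why the flips needed stay within a local region or why they preserve planarity globally. If you want to turn this into a proof, you would need to give a precise definition of the generalized double circle configuration, a precise statement of the relocation lemma with explicit hypotheses on what is preserved, and an actual argument for it.
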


\paragraph{Contribution.}
In this paper, we prove the following theorem\footnote{Simultaneously to our work, Kleist et al.~\cite{KKR24} independently proved a result that implies that the flip graph in Theorem~\ref{thm:1InOrOut}(a) is connected. We are not aware of the diameter they proved or if Theorem~\ref{thm:1InOrOut}(b) and (c) also follow from their work.}. 

\begin{theorem*}[\ref{thm:1InOrOut}]
Let $ C $ be a set of $ n - 1 $ points in convex position and $ \xi $ be a point outside the convex hull of $ C $. The following holds.
\begin{thmEnumerate}
    \item The flip graph of $ S = C \cup \{ \xi \} $ is connected and its diameter is at most $ 2 n $.
    \item The subgraph of the flip graph of $ S = C \cup \{ \xi \} $ induced by paths where $ \xi $ has degree $ 1 $ is connected and its diameter is at most $ 4 n - 15 $.
    \item The subgraph of the flip graph of $ S = C \cup \{ \xi \}$ induced by plane paths where $ \xi $ has degree $ 2 $ is connected and its diameter is at most $ 2 n $.
\end{thmEnumerate}
\end{theorem*}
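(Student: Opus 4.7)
The plan is to choose canonical plane spanning paths and reduce every other path to the canonical one in $O(n)$ flips, while respecting the degree constraint on $\xi$ in parts (b) and (c). Let $p, q \in C$ be the two points for which $\xi p$ and $\xi q$ are tangent to the convex hull of $C$, and sort $C$ by angle around $\xi$ as $p = c_1, c_2, \ldots, c_{n-1} = q$. The open angular wedges at $\xi$ bounded by consecutive rays $\xi c_i, \xi c_{i+1}$ are pairwise disjoint, so the polyline $c_1 c_2 \cdots c_{n-1}$ is plane, and hence $P^\star_b := \xi, c_1, c_2, \ldots, c_{n-1}$ and $P^\star_c := c_1, \ldots, c_k, \xi, c_{k+1}, \ldots, c_{n-1}$ (for any $1 \le k \le n-2$) are plane spanning paths of $S$. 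Moreover, $\xi c_1$ and $\xi c_{n-1}$ are tangent to the convex hull of $C$, so no segment between two points of $C$ can cross either of them.

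For part (b), I would reduce any plane spanning path $P$ with $\deg(\xi) = 1$ to $P^\star_b$ in two phases, keeping $\deg(\xi) = 1$ throughout. Phase I brings $\xi$'s unique neighbor to the tangent point $c_1$: if the neighbor is currently $c_j$ with $j \ne 1$, flip only inside the $C$-sub-path (with $c_j$ kept as one endpoint) until an angularly adjacent point becomes the far endpoint, then perform a single flip swapping $\xi$'s edge accordingly, and iterate. Phase II, after $\xi c_1$ is fixed, transforms the remaining plane spanning path of $C$ (with $c_1$ as an endpoint) into $c_1, c_2, \ldots, c_{n-1}$ using flips inside $C$; because $\xi c_1$ is tangent, it is never crossed, so every intermediate path is plane. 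Careful bookkeeping of the two phases should yield the stated bound $4n - 15$.

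Part (c) is handled analogously with target $P^\star_c$: the two edges $c_k \xi$ and $\xi c_{k+1}$ bound disjoint angular wedges at $\xi$, so the two $C$-sub-paths on either side of $\xi$ can be straightened essentially independently, giving diameter at most $2n$. For part (a), if $\deg(\xi) = 2$ in the starting path, a single flip that removes one of $\xi$'s edges and reconnects the two resulting pieces by a new edge drops $\deg(\xi)$ to $1$; part (b) then applies. Allowing intermediate paths to have either degree at $\xi$ loosens the rigidity of part (b)'s reduction and tightens the overall bound to $2n$.

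The main obstacle is Phase I of part (b): because $\deg(\xi) = 1$ is enforced at every step, each flip must either preserve $\xi$'s incident edge (altering only the $C$-sub-path) or swap $\xi$'s neighbor to what is currently an endpoint of that sub-path. Guaranteeing that the swap results in a plane path and that we can always drive $\xi$'s neighbor toward a tangent point requires a geometric argument tied to the visibility structure of $C$ from $\xi$. A secondary obstacle is establishing a fixed-endpoint variant of Theorem~\ref{thm:convex}: the subgraph of $\FlipGraph{C}$ induced by plane spanning paths having a prescribed endpoint must be shown connected with controlled diameter, which the proof likely achieves by tracking the prescribed endpoint through the original argument for convex position.
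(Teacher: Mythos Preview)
Your high-level plan---straighten the $C$-portion of the path using the convex-position reconfiguration result while keeping $\xi$'s attachment fixed, then adjust $\xi$'s attachment---is the paper's strategy too, and the ``fixed-endpoint variant of Theorem~\ref{thm:convex}'' you anticipate needing is precisely Lemma~\ref{lem:convex}. But the paper organises the argument around degree~$2$ rather than degree~$1$, and that choice is what makes the bounds come out.

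Concretely, the paper calls a path \emph{canonical} if it uses only convex-hull edges of $C$ and edges incident to~$\xi$, and \emph{strongly canonical} if in addition $\xi$ has degree~$2$ with its two neighbours consecutive on the hull of~$C$ and both $\xi$-edges outer. Lemma~\ref{lem:main} shows every plane path reaches a canonical one in at most $n-6$ flips (apply Lemma~\ref{lem:convex} to each of the two $C$-sub-paths on either side of~$\xi$; this never touches $\xi$'s incident edges). Lemma~\ref{lem:canonical} then shows, by a short five-case analysis, that any canonical path is within~$6$ flips of any strongly canonical one. This gives~\ref{item:1INthm:1InOrOut} with bound $2(n-6)+12=2n$, and~\ref{item:3INthm:1InOrOut} for free because every flip used happens to preserve $\deg(\xi)=2$. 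For~\ref{item:2INthm:1InOrOut}, since the reduction to canonical already preserves $\deg(\xi)$, one only has to connect the canonical paths with $\deg(\xi)=1$ to one another; these sit on a short cycle of length $O(n)$, giving $2(n-6)+2(n-2)+1=4n-15$.

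Your proposal instead makes~\ref{item:2INthm:1InOrOut} primary and tries to derive~\ref{item:1INthm:1InOrOut} from it, which creates two genuine gaps. First, your Phase~I---migrating $\xi$'s sole neighbour toward a tangent point while keeping $\deg(\xi)=1$ throughout---is, as you yourself flag, the crux, and you have not supplied the argument; the paper never needs such a step because its ``moving the spike'' case in Lemma~\ref{lem:canonical} handles the analogous adjustment for degree~$2$ in at most two flips. Second, routing~\ref{item:1INthm:1InOrOut} through~\ref{item:2INthm:1InOrOut} cannot give the bound~$2n$: your own target for~\ref{item:2INthm:1InOrOut} is $4n-15$, and the sentence ``allowing intermediate paths to have either degree at~$\xi$ \ldots\ tightens the overall bound to~$2n$'' is not an argument. (Separately, the claim that a single flip always drops $\deg(\xi)$ from~$2$ to~$1$ is unjustified: removing an edge at~$\xi$ forces you to reconnect the two pieces by an edge between their far endpoints, which need not be non-crossing.) The structural fix is to make degree~$2$ the primary case, as the paper does.
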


Furthermore, we show that, for every set $ S $ of at least $ 3 $ points in general position, any connected component of the flip graph $ \FlipGraph { S } $ has at least $ 3 $ vertices. 

\paragraph{Related Work.}
Flip graphs of non-crossing segments with a flip operation that removes a segment and inserts another one have been studied for several structures other than paths.
In the case of triangulations~\cite{lubiw2015flip,pilz2014flip} the graph is connected with diameter $ \OO { n ^ 2 } $ for $ n $ points in general position~\cite{lawson1972transforming} and at most $ 2 n - 10 $ for $ n \geq 13 $ points in convex position~\cite{pournin2014diameter}.
In the case of trees~\cite{aichholzer2022reconfiguration} the flip graph is known to be connected with diameter at most $ 2 n - 4 $ for $ n $ points in general position~\cite{nichols2020transition} and at most $ \frac{5}{3} n - 3 $ for $ n $ points in convex position~\cite{newtrees}.

The connectivity of the flip graph is open for non-crossing matchings~\cite{hernando2002graphs, houle2005graphs}, but the flip operation consists of exchanging the edges in a cycle. If we allow segments to cross, a natural flip operation consists of exchanging pairs of crossing segments by non-crossing segments with the same endpoints. Using this operation, we always reach a non-crossing set of segments before a cubic number of flips~\cite{VLSC81}, but better bounds exist for special point sets such as convex~\cite{OdW07} and near-convex~\cite{longestuntangleconf}.

For a comprehensive overview of reconfiguration problems, readers can refer to multiple surveys~\cite{nishimura2018introduction,van2013complexity}.

\section{One Point Outside a Convex}

In this section, we show that Conjecture~\ref{cjt:main} holds for a set of points if all points but one are in convex position.
Conjecture~\ref{cjt:main} has already been proved in~\cite{Flipping} in the case where the point which is not in convex position lies in the interior of the convex hull of the point set. We provide a proof for the case where the point which is not in convex position lies outside the convex hull of the other points.

\begin{theorem}
\label{thm:1InOrOut}
Let $ C $ be a set of $ n - 1 $ points in convex position and $ \xi $ be a point outside the convex hull of $ C $.
The following holds.
\begin{thmEnumerate}
    \item \label{item:1INthm:1InOrOut} The flip graph of $ S = C \cup \{ \xi \} $ is connected and its diameter is at most $ 2 n $.
    \item \label{item:2INthm:1InOrOut} The subgraph of the flip graph of $ S = C \cup \{ \xi \} $ induced by paths where $ \xi $ has degree $ 1 $ is connected and its diameter is at most $ 4 n - 15 $.
    \item \label{item:3INthm:1InOrOut} The subgraph of the flip graph of $ S = C \cup \{ \xi \}$ induced by plane paths where $ \xi $ has degree $ 2 $ is connected and its diameter is at most $ 2 n $.
\end{thmEnumerate}
\end{theorem}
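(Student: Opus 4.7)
My plan is to define canonical paths for each of the three subgraphs and to reduce every plane spanning path to its canonical form in few flips; the diameter bounds then follow by triangle inequality. Set up notation as follows. Since $\xi$ lies outside $\mathrm{conv}(C)$, let $a, b \in C$ be the two vertices where the tangent lines from $\xi$ touch $\mathrm{conv}(C)$; thus $\xi$, $a$, $b$ are consecutive on $\mathrm{conv}(S)$. Call the arc of $\partial\mathrm{conv}(C)$ from $a$ to $b$ that faces $\xi$ the \emph{near arc} and the other arc the \emph{far arc}. A segment $\xi c$ with $c$ on the near arc lies outside $\mathrm{conv}(C)$ (except at $c$), whereas $\xi c$ with $c$ on the far arc must cross exactly two edges of $\partial\mathrm{conv}(C)$ lying on the near arc; this visibility structure drives every subsequent argument.

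For part~\ref{item:3INthm:1InOrOut}, a path $P$ where $\xi$ has degree $2$ has local form $P_1 \cdot u \, \xi \, w \cdot P_2$ with $u, w \in C$, and $P_1$, $P_2$ plane sub-paths on $C$. The crucial lemma I would prove is that whenever $\{u, w\} \neq \{a, b\}$, a single flip suffices to move one of $\xi$'s neighbors strictly closer to $\{a, b\}$ in cyclic order along $\partial\mathrm{conv}(C)$. The flip replaces $\xi u$ by $\xi u'$ for some $u'$ one step closer to $a$ along the near arc, and simultaneously removes an edge of $P_1$ incident to $u'$ that blocked $u'$ from being re-attached to $\xi$; plane-ness holds because $u'$ is near-arc-between $u$ and $a$, so $\xi u'$ is disjoint from the remainder of $P$. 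Iterating costs at most roughly $n$ flips to reach a path whose $\xi$-neighbors are exactly $\{a, b\}$. Deleting $\xi$ and adding the hull edge $ab$ turns $P$ into a plane Hamiltonian path on the convex set $C$, which by Theorem~\ref{thm:convex} can be normalized in at most $2(n{-}1) - 6$ additional flips. Amortizing the two phases yields the claimed diameter $2n$.

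Parts~\ref{item:1INthm:1InOrOut} and~\ref{item:2INthm:1InOrOut} are handled by routing through~\ref{item:3INthm:1InOrOut}. If $\xi$ has degree~$1$ in a path $\xi v_1 v_2 \cdots v_{n-1}$, a single well-chosen flip (removing some interior edge $v_i v_{i+1}$ and inserting $\xi v_{i+1}$, provided this edge is plane) produces a path in which $\xi$ has degree~$2$. For part~\ref{item:1INthm:1InOrOut}, this shows that any two plane spanning paths can be connected by routing through the degree-$2$ subgraph, and a careful accounting gives the bound $2n$. For part~\ref{item:2INthm:1InOrOut}, we enter the degree-$2$ subgraph, normalize there via~\ref{item:3INthm:1InOrOut}, and exit by a final flip to a canonical degree-$1$ path; the bound $4n - 15$ accommodates the two transitions together with the internal normalization, with constants tightened by observing that the intermediate normalization need not reach an arbitrary degree-$2$ path.

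The main obstacle I foresee is the key flip lemma underlying~\ref{item:3INthm:1InOrOut}: that a single plane flip always exists that moves a $\xi$-neighbor toward $\{a, b\}$ while preserving both planarity and the spanning path structure. This requires a geometric case analysis depending on whether $u$ lies on the near or far arc and on how the candidate edge $\xi u'$ and the edge of $P_1$ to be removed interact with the routings of $P_1$ and $P_2$; the far-arc case is especially delicate because $\xi u$ then crosses portions of the near arc, restricting which convex-hull edges of $C$ may appear in $P$. A secondary challenge is matching the tight diameter constants, which requires amortized accounting between the two phases and careful verification that the bridging flip between the degree-$1$ and degree-$2$ subgraphs can always be chosen.
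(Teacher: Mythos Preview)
Your plan has two genuine gaps.

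\textbf{Part~\ref{item:2INthm:1InOrOut} cannot be proved by routing through the degree-$2$ subgraph.} The statement asks for connectivity of the subgraph \emph{induced} by plane paths in which $\xi$ has degree~$1$; a flip that gives $\xi$ a second neighbour leaves this subgraph and is therefore illegal. Your proposed strategy ``enter the degree-$2$ subgraph, normalize there, exit'' establishes at best that two degree-$1$ paths lie in the same component of the full flip graph, which is part~\ref{item:1INthm:1InOrOut}, not part~\ref{item:2INthm:1InOrOut}. The paper instead observes that the reduction of each sub-path on $C$ to convex-hull edges (its Lemma~2.2, essentially Lemma~3.1 of Akl et al.) never touches $\xi$, hence preserves $\deg(\xi)=1$; it then connects the finitely many canonical degree-$1$ paths to one another by flips that swap $\xi p_i\leftrightarrow\xi p_{i+1}$ or $p_{i-1}p_i\leftrightarrow p_ip_{i+1}$, all of which keep $\deg(\xi)=1$.

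\textbf{The ``key flip lemma'' for part~\ref{item:3INthm:1InOrOut} is not a flip.} You write that a single flip ``replaces $\xi u$ by $\xi u'$ \dots\ and simultaneously removes an edge of $P_1$ incident to $u'$''. That is two deletions and one insertion. If instead you mean the single flip that removes $\xi u$ and inserts $\xi u'$, the result is a spanning path only when $u'$ is an \emph{endpoint} of the component $P_1\cdot u$, i.e.\ $u'$ is the far extremity of $P_1$; it is not in general ``one step closer to $a$ along the near arc''. So the inductive step ``move a neighbour of $\xi$ one hull-edge toward $\{a,b\}$'' does not exist as stated. The paper's manoeuvre (its Lemma~2.1, Cases~4--5) is precisely the legal version: replace $\xi\alpha$ by $\xi a$ where $a$ is the current path extremity on the $\alpha$-side; this is why the paper first canonicalises the two sub-paths on $C$ (so that the extremities sit where you want them) and only afterwards adjusts the neighbours of $\xi$ with a constant number of flips, rather than the other way round as you propose.

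In short, the paper's decomposition is the reverse of yours: first flatten $P_1$ and $P_2$ onto $\partial\mathrm{conv}(C)$ using the convex-position lemma (at most $n-6$ flips, never touching $\xi$), then spend $O(1)$ flips to standardise the one or two edges at $\xi$. This ordering is what makes the degree-preservation arguments for \ref{item:2INthm:1InOrOut} and \ref{item:3INthm:1InOrOut} go through and what yields the stated constants.
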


The proof of Theorem~\ref{thm:1InOrOut}\ref{item:1INthm:1InOrOut} is based on two types of plane paths: canonical and non-canonical (defined hereafter).
In turn, a canonical path may be either strongly canonical or not (defined hereafter).
In Lemma~\ref{lem:canonical}, we show that any canonical path is connected to a strongly canonical path.
In Lemma~\ref{lem:main}, we show that any non-canonical path is connected to some canonical path.
The two lemmas together immediately imply Theorem~\ref{thm:1InOrOut}\ref{item:1INthm:1InOrOut}.
In the following, we define the notions used in the proofs of Lemma~\ref{lem:canonical}, Lemma~\ref{lem:main}, and Theorem~\ref{thm:1InOrOut}.

Given a set of points $ C $ in convex position and a point $ \xi $ outside the convex hull of $ C $, a \emph{canonical segment} of $ C \cup \{ \xi \} $ is a segment that is either incident to $ \xi $ or is an edge of the convex hull of $ C $.
We classify the canonical segments into three types: the edges of the convex hull of $ C $, called \emph{convex} segments, the segments incident to $ \xi $ which do not intersect the interior of the convex hull of $ C $, called \emph{outer} segments, and the remaining segments, called \emph{inner} segments.
A \emph{canonical path} of $ C \cup \{ \xi \} $ is then defined as a plane path of $ C \cup \{ \xi \} $ that contains only canonical segments.
We also define a special type of canonical paths. A canonical path $ P _ 0 $ is \emph{strongly canonical} if (i) $ \xi $ is not an extremity of $ P $, (ii) the two points $ \alpha _ 0 , \beta _ 0 $ adjacent to $ \xi $ in $ P $ are consecutive on the convex hull of $ C $, and (iii) the segments $ \xi \alpha _ 0 , \xi \beta _ 0 $ are outer segments.
We now state and prove Lemma~\ref{lem:canonical}.

\begin{lemma}
    \label{lem:canonical}
    There exists a path of length at most $ 6 $ which connects any canonical path $ P $ to any strongly canonical path $ P _ 0 $ in the subgraph of $ \FlipGraph { S } $ induced by canonical paths.
\end{lemma}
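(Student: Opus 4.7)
My plan is to exhibit, for every canonical $P$ and strongly canonical $P_0$, a sequence of at most six flips from $P$ to $P_0$ that stays in the canonical subgraph. The starting point is a structural description: every canonical path uses only hull edges of $C$ plus at most two segments incident to $\xi$. Since the hull of $C$ is a cycle of $n-1$ edges, a canonical path in which $\xi$ has degree $1$ omits exactly one hull edge, and one in which $\xi$ has degree $2$ omits exactly two hull edges. Moreover, by planarity the unique $\xi$-segment in the degree-$1$ case must be outer (an inner segment would cross a hull edge of the path), and in the degree-$2$ case at most one $\xi$-segment can be inner; when one is, the two omitted hull edges are precisely those crossed by it.

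I would carry out the reduction in two phases, meeting $P_0$ in the middle. In the first phase (at most three flips), bring $P$ to a \emph{quasi-strongly canonical} intermediate $P'$---a canonical path with $\xi$ of degree $2$ and both incident segments outer. If $\xi$ has degree $1$ in $P$ with neighbor $v$, a single flip replacing the next hull edge $vw$ by the outer $\xi$-segment $\xi w$ promotes $\xi$ to degree $2$; in the ``tangent case'' where $w$ lies off the near arc, one further slide flip relocates $\xi$'s neighbors onto it. If $\xi$ already has degree $2$ with an inner segment, a single flip swaps the inner segment for an outer one while re-inserting into the path the hull edges it was crossing. In the second phase (at most three further flips), I transform $P'$ into $P_0$ by applying slide flips---each of which replaces one incidence of $\xi$ by an incidence to the far extremity of one hull sub-arc---to bring $\xi$'s neighbors to $\alpha_0$ and $\beta_0$, then apply at most one hull-to-hull flip to align the ``other'' omitted hull edge with that of $P_0$.

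The main obstacle is the uniform bound of six: the case analysis must partition canonical paths by the degree of $\xi$, the type (inner or outer) of its $\xi$-segments, and the position of its neighbors relative to $\alpha_0, \beta_0$, and in each case exhibit a short sequence that stays canonical and plane. The subtleties include choosing the correct orientation of slide flips so as not to cross any hull sub-arc already present, and, most delicately, handling the case where $P$ has an inner $\xi$-segment that forces the two omitted hull edges to lie on opposite sides of the hull: the single flip eliminating the inner segment must be chosen so that the resulting quasi-strongly canonical $P'$ lies within the remaining flip budget of $P_0$. A counting argument at the end, by case, verifies that the sum across the two phases never exceeds six.
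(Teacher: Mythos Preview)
Your two-phase plan is natural, but the structural analysis that drives the flip counts contains genuine errors.

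First, the claim that ``at most one $\xi$-segment can be inner'' is false. Take $C=\{p_1,\dots,p_6\}$ in convex position with $\xi$ just outside the edge $p_1p_2$, so that $p_1,p_2$ are the only visible vertices. Then the canonical path $p_2\,p_3\,p_4\,\xi\,p_5\,p_6\,p_1$ is plane, and both $\xi p_4$ and $\xi p_5$ are inner (each crosses the omitted hull edge $p_1p_2$). This is exactly the situation the paper isolates as its hardest case (Case~4), and it is where the bound of $6$ is attained; your Phase~1 simply does not cover it. Relatedly, your assertion that an inner segment ``crosses the two omitted hull edges'' is wrong: by convexity of $C$, the segment $\xi v$ meets the hull boundary only at $v$ and at a single interior point of one hull edge, so an inner segment crosses exactly one hull edge. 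Hence the sentence ``a single flip swaps the inner segment for an outer one while re-inserting into the path the hull edges it was crossing'' cannot be right as a description of a flip.

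Second, the Phase~2 budget of three flips is not justified. A ``slide flip'' that removes $\xi\alpha'$ and inserts $\xi x$ yields a path only when $x$ is a current endpoint of $P'$; thus from a generic quasi-strongly canonical $P'$ you cannot send $\xi$'s neighbours directly to $\alpha_0,\beta_0$. The paper handles this by first spending one flip to move the hole so that $\alpha_0,\beta_0$ become the endpoints, then performing up to two slides, then one more flip to move the hole to $a_0b_0$: four flips in total, not three. Without that preparatory hole move your slides land at $a',b'$, which need not lie on the near arc at all (so the inserted $\xi$-segments may be inner or cross edges still in the path). Your $3+3$ split therefore breaks down precisely in the two places where the paper's proof is tight: the both-inner case (two flips to reach a strongly canonical shape) followed by four flips to reach $P_0$.
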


\begin{figure}[htb]
    \centering
    \begin{tabularx}{\textwidth}{CCCC}
        \includegraphics[scale=\graphicsScale,page=1]{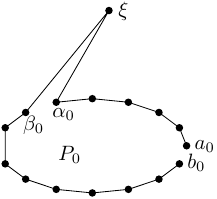}&%
        \includegraphics[scale=\graphicsScale,page=2]{graphics/lemmaCanonical.pdf}&%
        \includegraphics[scale=\graphicsScale,page=3]{graphics/lemmaCanonical.pdf}&%
        \includegraphics[scale=\graphicsScale,page=4]{graphics/lemmaCanonical.pdf}\\
        (a) & (b) & (c) & (d)%
    \end{tabularx}
 \caption{Illustration of the proof of Lemma~\ref{lem:canonical}. (a) The strongly canonical path $ P _ 0 $. (b) The canonical path $ P $ in Case~\ref{case:1}. (c) The canonical path $ P $ in Case~\ref{case:2}. (d) The canonical path $ P _ 1 $ in Case~\ref{case:2}.}
 \label{fig:lemmaCanonicalCase1-2}
\end{figure}

\begin{proof}
    First, note that replacing a segment by an outer segment in a canonical path yields a canonical path since no segment crosses an outer segment.
    Similarly, replacing a segment by a convex segment in a canonical path without inner segments yields a canonical path.
    In fact, throughout the following proof of Lemma~\ref{lem:canonical}, we only perform flips of these two types.

    Let $ P $ (respectively $ P _ 0 $) be an arbitrary canonical path (respectively an arbitrary strongly canonical path).
    Let $ \mya , b $ (respectively $ \mya _ 0 , b _ 0 $) be the extremities of $ P $ (respectively of $ P _ 0 $).
    Let $ \alpha , \beta $ (respectively $ \alpha _ 0 , \beta _ 0 $) be the points adjacent to $ \xi $ in the cycle formed by adding an edge to $ P $ (respectively to $ P _ 0 $).
    Note that, since $ \alpha _ 0 , \beta _ 0 $ are consecutive vertices of the convex hull of $ C $, it is also the case for the extremities $ \mya _ 0 , b _ 0 $ of $ P $.
    Figure~\ref{fig:lemmaCanonicalCase1-2}(a) shows the notations for $ P _ 0 $ while Figure~\ref{fig:lemmaCanonicalCase1-2}(c) shows the notations for $ P $.

    Assume that $ P \neq P _ 0 $.
    In each of the following cases, we construct a path of length at most $ 6 $ in the subgraph of the flip graph induced by canonical paths that connects $ P $ and $ P _ 0 $.
    \begin{Cases}
        \item \label{case:1} \textbf{Moving the hole.}
        If $ \{ \alpha , \beta \} = \{ \alpha _ 0 , \beta _ 0 \} $, then replacing the segment $ \mya _ 0 b _ 0 $ by $ \mya b $ in $ P $ yields $ P _ 0 $ (Figure~\ref{fig:lemmaCanonicalCase1-2}(b)).
        We have a path of length $ 1 $ connecting $ P $ and $ P _ 0 $.
        
        \item \label{case:2} \textbf{Moving the spike.}
        Otherwise, if $ \xi \alpha $ and $ \xi \beta $ are two outer segments (not necessarily in $ P $ or $ P _ 0 $), and if $ \alpha , \beta $ are consecutive on the convex hull of $ C $, then by Case~\ref{case:1} there exists a path of length $ 1 $ connecting $ P $ and a canonical path $ P _ 1 $ preserving the same points adjacent to $ \xi $ but whose extremities are $ \{ \alpha _ 0 , \beta _ 0 \} $ (Figure~\ref{fig:lemmaCanonicalCase1-2}(c) and~(d)).
        Without loss of generality, assume that the sub-path connecting $ \alpha $ and $ \alpha _ 0 $ (respectively $ \beta $ and $ \beta _ 0 $) in $ P _ 1 $ does not contain $ \xi $.
        We perform the following flips on $ P _ 1 $ (either $ 1 $ or $ 2 $ flips are performed).
        If $ \alpha \neq \alpha _ 0 $, then replace the segment $ \xi \alpha $ by $ \xi \alpha _ 0 $.
        If $ \beta \neq \beta _ 0 $, then replace the segment $ \xi \beta $ by $ \xi \beta _ 0 $.
        These flips yield a canonical path $ P _ 2 $ whose points adjacent to $ \xi $ are the same as in $ P _ 0 $ (Figure~\ref{fig:lemmaCanonicalCase2-4}(a)).
        By Case~\ref{case:1}, there exists a path of length $ 1 $ connecting $ P _ 2 $ and $ P _ 0 $.
        Overall, we have a path of length at most $ 4 $ connecting $ P $ and $ P _ 0 $.

        \begin{figure}[htb]
            \centering
            \begin{tabularx}{\textwidth}{CCCC}
                \includegraphics[scale=\graphicsScale,page=5]{graphics/lemmaCanonical.pdf}&%
                \includegraphics[scale=\graphicsScale,page=6]{graphics/lemmaCanonical.pdf}&%
                \includegraphics[scale=\graphicsScale,page=7]{graphics/lemmaCanonical.pdf}&%
                \includegraphics[scale=\graphicsScale,page=8]{graphics/lemmaCanonical.pdf}\\
                (a) & (b) & (c) & (d)%
            \end{tabularx}
         \caption{Illustration of the proof of Lemma~\ref{lem:canonical}. (a) The canonical path $ P _ 2 $ in Case~\ref{case:2}. (b) The canonical path $ P $ in Case~\ref{case:3}. (c) The canonical path $ P _ 3 $ in Case~\ref{case:3}. (d) The canonical path $ P $ in Case~\ref{case:4}.}
         \label{fig:lemmaCanonicalCase2-4}
        \end{figure}
        
        \item \label{case:3} \textbf{Moving a phantom inner segment out.} 
        Otherwise, if one of $ \xi \alpha $ or $ \xi \beta $, say $ \xi \beta $, is an inner segment and is not in $ P $, then $ \xi \alpha $ is an outer segment and is in $ P $ (Figure~\ref{fig:lemmaCanonicalCase2-4}(b)).
        In this case, since $ P $ is plane, $ \alpha , \beta $ are consecutive on the convex hull of $ C $.
        Let $ q \neq \beta $ be such that $ q , \alpha $ are consecutive on the convex hull of $ C $.
        The segment $ q \xi $ (not in $ P $) is an outer segment.
        Therefore, replacing the segment $ q \alpha $ by $ q \xi $ in $ P $ yields a canonical path $ P _ 3 $ satisfying the conditions of Case~\ref{case:2} (Figure~\ref{fig:lemmaCanonicalCase2-4}(c)).
        By Case~\ref{case:2}, there exists a path of length at most $ 4 $ connecting $ P _ 3 $ and $ P _ 0 $.
        Overall, we have a path of length at most $ 5 $ connecting $ P $ and $ P _ 0 $.

        \item \label{case:4} \textbf{Moving inner segments out.}
        Otherwise, if at least one of $ \xi \alpha $ or $ \xi \beta $, say $ \xi \alpha $, is an inner segment and is in $ P $, then $ \xi \beta $ is also in $ P $ (and may be either an inner or an outer segment; see Figure~\ref{fig:lemmaCanonicalCase2-4}(d)).
        Without loss of generality, assume that the sub-path of $ P $ connecting $ \mya $ and $ \alpha $ (respectively $ b $ and $ \beta $) does not not contain $ \xi $.
        In this case, since $ P $ is plane, both $ \mya $ and $ b $ are vertices of the convex hull of $ C $, and $ \xi \mya $ is an outer segment (not in $ P $).
        In this setting, replacing the segment $ \xi \alpha $ by $ \xi \mya $ in $ P $ yields a canonical path $ P _ 4 $ with at most $ 1 $ inner segment (Figure~\ref{fig:lemmaCanonicalCase4-5}(a)).
        If $ P _ 4 $ indeed has an inner segment, then going through Case~\ref{case:4} again yields a canonical path $ P _ 5 $ (Figure~\ref{fig:lemmaCanonicalCase4-5}(b)).
        Otherwise, let $ P _ 5 = P _ 4 $.
        In both cases, $ P _ 5 $ satisfies the conditions of Case~\ref{case:2}.
        By Case~\ref{case:2}, there exists a path of length at most $ 4 $ connecting $ P _ 5 $ and $ P _ 0 $.
        Overall, we have a path of length at most $ 6 $ connecting $ P $ and $ P _ 0 $.

        \begin{figure}[htb]
            \centering
            \begin{tabularx}{\textwidth}{CCCC}
                \includegraphics[scale=\graphicsScale,page=9]{graphics/lemmaCanonical.pdf}&%
                \includegraphics[scale=\graphicsScale,page=10]{graphics/lemmaCanonical.pdf}&%
                \includegraphics[scale=\graphicsScale,page=11]{graphics/lemmaCanonical.pdf}&%
                \includegraphics[scale=\graphicsScale,page=12]{graphics/lemmaCanonical.pdf}\\
                (a) & (b) & (c) & (d)%
            \end{tabularx}
         \caption{Illustration of the proof of Lemma~\ref{lem:canonical}. (a) The canonical path $ P _ 4 $ in Case~\ref{case:4}. (b) The canonical path $ P _ 5 $ in Case~\ref{case:4}. (c) The canonical path $ P $ in Case~\ref{case:5}. (d) The canonical path $ P _ 6 $ in Case~\ref{case:5}.}
         \label{fig:lemmaCanonicalCase4-5}
        \end{figure}

        \item \label{case:5} \textbf{Shrinking the spike.}
        In this remaining case, $ \alpha , \beta $ are not consecutive on the convex hull of $ C $ (Figure~\ref{fig:lemmaCanonicalCase4-5}(c)).
        Since $ P $ is plane, both $ \xi \alpha $ and $ \xi \beta $ are outer segments (in $ P $).
        Moreover, both $ \mya $ and $ b $ are vertices of the convex hull of $ C $.
        Thus, one of $ \xi \mya $ or $ \xi b $ is an outer segment (not in $ P $).
        This allows the following assumptions.
        We assume without loss of generality that the sub-path connecting $ \alpha $ and $ \mya $ in $ P $ does not contain $ \xi $, and that $ \xi \mya $ is an outer segment.
        In this setting, replacing the segment $ \xi \alpha $ by $ \xi \mya $ in $ P $ yields a canonical path $ P _ 6 $ satisfying the conditions of Case~\ref{case:2} (Figure~\ref{fig:lemmaCanonicalCase4-5}(d)).
        By Case~\ref{case:2}, there exists a path of length at most $ 4 $ connecting $ P _ 6 $ and $ P _ 0 $.
        Overall, we have a path of length at most $ 5 $ connecting $ P $ and $ P _ 0 $. \qedhere
    \end{Cases}
\end{proof}

\begin{lemma}
    \label{lem:main}
    Let $ C $ be a set of $ n - 1 $ points in convex position and $ \xi $ be a point outside the convex hull of $ C $.  For every vertex $ P $ of the flip graph of $ S = C \cup \{ \xi \} $, there exists a flip path of length at most $ n - 6 $ connecting $ P $ and a canonical path.
\end{lemma}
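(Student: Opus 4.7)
Plan:
The plan is to prove Lemma~\ref{lem:main} by induction on the number of non-canonical (chord) segments of $P$. Denote by $\mu(P)$ the number of segments of $P$ whose endpoints are two non-consecutive points of the convex hull of $C$; equivalently, $\mu(P)$ is the number of non-canonical segments. I would then show that whenever $\mu(P)>0$, there is a single flip $P\to P'$ with $\mu(P')=\mu(P)-1$. Iterating this reduction produces a flip path of length $\mu(P)$ from $P$ to a canonical path, and it then suffices to show $\mu(P)\le n-6$.

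First, to exhibit the flip, I would select an \emph{outermost} chord $e=uv$ of $P$: a chord such that one of the two open arcs of $C\setminus\{u,v\}$ along the convex hull of $C$ contains no other endpoint of a chord of $P$. Let $A$ denote this chord-free arc. Because $P$ is plane and $u,v$ are on the convex hull, removing $e$ splits $P$ into two subpaths $P_A$ and $P_B$, and the vertices of $P_A$ other than $u$ (or $v$) lie on $A$ while those of $P_B$ lie on the opposite arc $B$; moreover, planarity together with the outermost property forces $P_A$ to traverse $A$ in convex-hull order. This rigidity lets us pick a canonical replacement edge for $e$: either the hull edge of $C$ joining the far endpoint of $P_A$ to the hull-adjacent vertex on the other subpath, or a segment incident to $\xi$ in the case where $\xi$ lies in $P_A$ (so the endpoint of $P_A$ is $\xi$ itself, and any reconnection segment out of $\xi$ is automatically canonical). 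The outermost property of $e$ is exactly what prevents the new segment from crossing any remaining edge.

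Second, I would bound $\mu(P)$ by $n-6$ through a structural argument. A plane Hamiltonian path of points in convex position has a hull edge at each of its two endpoints, accounting for two canonical segments. In $P$, the one or two segments incident to $\xi$ are canonical as well. Finally, the position of $\xi$ outside the convex hull of $C$ forces additional hull edges of $C$ in $P$: roughly, the neighbors of $\xi$ in $P$ (and the endpoints of $P$) are constrained in a way that yields extra hull-edge canonical segments beyond the generic convex-position count. Summing these forced canonical segments and subtracting from the total $n-1$ edges of $P$ yields $\mu(P)\le n-6$.

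The hard part will be twofold. The main technical work is the case analysis for the inductive step: depending on whether $\xi$ lies in $P_A$ or $P_B$, whether the outermost arc $A$ is empty or not, and whether the relevant edge from $\xi$ is outer or inner, the canonical replacement segment differs, and planarity must be re-verified each time. Alongside this, the refined structural count giving $\mu(P)\le n-6$ (rather than the naive bound $\mu(P)\le n-4$ obtained from purely convex-position arguments) requires carefully exploiting the exterior position of $\xi$; this is where I expect the $n-6$ bound in the lemma to come from, and also where the proof is most delicate.
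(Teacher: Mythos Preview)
Your approach differs substantially from the paper's. The paper does not track the number $\mu(P)$ of non-canonical segments, nor does it remove chords one at a time. Instead it splits $P$ at $\xi$: writing $P=p_1,\dots,p_n$ with $p_i=\xi$, the two subpaths $p_1,\dots,p_{i-1}$ and $p_{i+1},\dots,p_n$ are plane paths on convex subsets of $C$, and the paper applies Lemma~\ref{lem:convex} (the convex-position flip lemma from~\cite{On}) to each of them separately. That lemma turns a plane path on $m$ convex points into a hull-boundary path in at most $m-3$ flips, irrespective of how many chords it started with; intermediate flips may replace a chord by another chord, so it is \emph{not} a one-flip-per-chord argument. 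One clean-up flip per side handles the edge that Lemma~\ref{lem:convex} is forced to preserve, and summing the two sides gives $(i-3)+(n-i-3)=n-6$. The geometric fact enabling the split is that the convex hull of each subpath does not meet the rest of $P$, so flips performed inside one half remain legal in $\FlipGraph{S}$.

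The concrete gap in your plan is the inequality $\mu(P)\le n-6$, which is false. Take $n=7$, $C=\{c_1,\dots,c_6\}$ in convex position, $\xi$ just outside the hull so that $\xi c_1$ is an outer segment, and $P=\xi,c_1,c_2,c_6,c_3,c_5,c_4$. This path is plane and contains the three chords $c_2c_6$, $c_6c_3$, $c_3c_5$, so $\mu(P)=3$ while $n-6=1$. In general, when $\xi$ is an endpoint of $P$ one can realise $\mu(P)=n-4$: the only forced canonical edges are the single edge at $\xi$ and the two hull edges at the ends of the convex subpath on $C$; the ``additional hull edges forced by the exterior position of $\xi$'' you hope for do not materialise. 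Since a single flip changes $\mu$ by at most one, your scheme cannot meet the stated bound on such inputs; the paper sidesteps this precisely because Lemma~\ref{lem:convex} does not charge one flip per chord. (In fact the same example shows that the bound $n-6$ is too optimistic when $\xi$ is an endpoint of $P$; the paper's summation $(i-3)+(n-i-3)$ tacitly assumes $3\le i\le n-2$.)
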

\begin{proof}
    Let $ P = p _ 1 , \dots , p _ n $ be a plane path in $ \Vertices {  \FlipGraph { S } } $.
    Let $ i $ be the index such that $ \xi = p _ i $.
    Next, we describe a flip path of length at most $ i - 4 $ connecting the plane path $ p _ 1 , \dots , p _ { i - 1 } $ to a plane path whose segments are all canonical segments of $ C \cup \{ \xi \} $.
    Since the interior of the convex hull of the plane path $ p _ 1 , \dots , p _ { i - 1 } $ does not intersect the other segments of $ P $, the flip path we describe is not only a flip path in $ \FlipGraph { \{ p _ 1 , \dots , p _ { i - 1 } \} } $ but also a flip path in $ \FlipGraph { S } $.

    If $ i \in \{ 1 , 2 \} $, then the only potential segment of the plane path $ p _ 1 , \dots , p _ { i - 1 } $ is already canonical.
    Otherwise $ i $ is at least $ 3 $, and Lemma~\ref{lem:convex} applied to the plane path $ p _ 1 , \dots , p _ { i - 1 } $ yields a plane path with $ p _ { i - 1 } $ as one of its extremities and with $ p _ { i - 2 } p _ { i - 1 } $  as one of its segments using at most $ i - 4 $ flips.
    If the segment $ p _ { i - 2 } p _ { i - 1 } $ is not canonical, then the extra flip replacing $ p _ { i - 2 } p _ { i - 1 } $ by $ p _ 1 p _ { i - 1 } $ finally leads to a path with only canonical edges.
    Thus, we have used at most $ i - 3 $ flips.

    Similarly, there exists a flip path in $ \FlipGraph { S } $ of length at most $ n - i - 3 $ connecting the plane path $ p _ { i + 1 } , \dots , p _ n $ to a plane path whose segments are all canonical segments of $ C \cup \{ \xi \} $.
    The remaining segments in $ P $ are incident to $ \xi = p _ i $ and thereby already canonical.
    In total, we have used at most $ ( i - 3 ) + ( n - i - 3 ) = n - 6 $, concluding the proof of Lemma~\ref{lem:main}.
\end{proof}

\begin{proof}[Proof of Theorem~\ref{thm:1InOrOut}]
    In the following, we prove the three assertions of Theorem~\ref{thm:1InOrOut}.
    
    \paragraph{\ref{item:1INthm:1InOrOut}.}
    Theorem~\ref{thm:1InOrOut}\ref{item:1INthm:1InOrOut} follows from Lemma~\ref{lem:canonical} using $ 12 $ flips and Lemma~\ref{lem:main} using $ 2 ( n - 6 ) $ flips for a total of $ 2 n $ flips.
    
    \paragraph{\ref{item:2INthm:1InOrOut}.}
    First note that, any flip performed in the proof of Lemma~\ref{lem:main} preserves the degree of $ \xi $.
    In the following, we show that all the canonical paths such that the degree of $ \xi $ is $ 1 $ are connected using only flips that preserve the degree of $ \xi $.

    Let $ p _ 1 , \dots , p _ { n - 1 } $ be the points of $ C $ in counterclockwise order with indices modulo $ n - 1 $, and $ \{ u , \dots , v \} $ be the integer interval of indices such that the segment $ \xi p _ i $ does not intersect the interior of the convex hull of $ C $ (Figure~\ref{fig:extraCanonical}(a)).
    Let $ P _ { i , j } $ be the canonical path such that the degree of $ \xi $ is $ 1 $, such that $ \xi $ is adjacent to $ p _ i $, and such that $ p _ j $ is an extremity of $ P _ { i , j } $.
    
    The set of all the canonical paths such that the degree of $ \xi $ is $ 1 $ consists of all the $ \{ P _ { i , i + 1 } , P _ { i , i - 1 } $ such that $ i \in \{ u , \dots , v \} $. 
    Thus, the following two assertions are enough to conclude the proof.
    \begin{itemize}
        \item For all $ i \in \{ u , \dots , v \} $,  $ P _ { i , i + 1 } $ and $ P _ { i , i - 1 } $ are connected by the flip replacing the segment $ p _ { i - 1 } p _ i $ with $ p _ i p _ { i + 1 } $ (Figure~\ref{fig:extraCanonical}(b) and (c)).
        \item For all $ i \in \{ u , \dots , v - 1 \} $,  $ P _ { i , i + 1 } $ and $ P _ { i + 1 , i } $ are connected by the flip replacing the segment $ \xi p _ i $ with $ \xi p _ { i + 1 } $ (Figure~\ref{fig:extraCanonical}(b) and (d)).
    \end{itemize}
    The diameter of the subgraph of the flip graph is $2 ( n - 6 ) + 2 ( n - 2 ) + 1 = 4 n - 15 $ flips.
    
    \begin{figure}[htb]
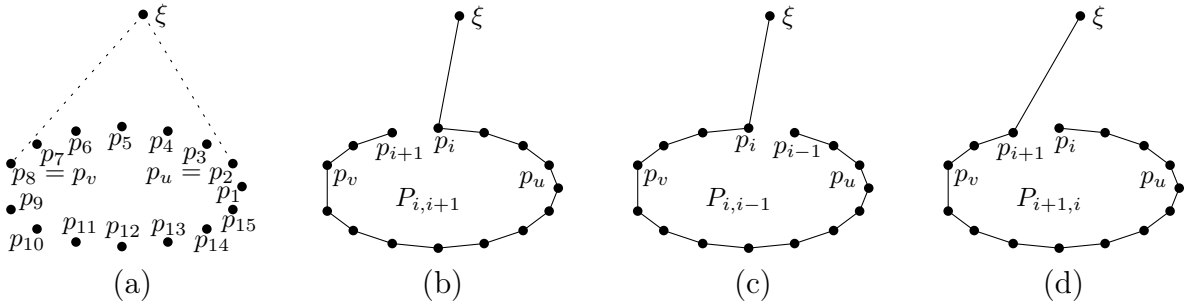

        \centering
        \begin{tabularx}{\textwidth}{CCCC}
            \includegraphics[scale=\graphicsScale,page=13]{graphics/lemmaCanonical.pdf}&%
            \includegraphics[scale=\graphicsScale,page=14]{graphics/lemmaCanonical.pdf}&%
            \includegraphics[scale=\graphicsScale,page=15]{graphics/lemmaCanonical.pdf}&%
            \includegraphics[scale=\graphicsScale,page=16]{graphics/lemmaCanonical.pdf}\\
            (a) & (b) & (c) & (d)%
        \end{tabularx}
        \caption{Illustration of the proof of Theorem~\ref{thm:1InOrOut}\ref{item:2INthm:1InOrOut}. (a) The points of $ C $ in counterclockwise order with indices modulo $ n - 1 $, with $ u = 2 $ and $ v = 8 $. (b) The canonical path $ P _ { i , i + 1 } $. (c) The canonical path $ P _ { i , i - 1 } $. (d) The canonical path $ P _ { i + 1 , i } $.}
        \label{fig:extraCanonical}
    \end{figure}
    
    \paragraph{\ref{item:3INthm:1InOrOut}.}
    If the degree of $ \xi $ is $ 2 $, then any flip used in the proof of Theorem~\ref{thm:1InOrOut}\ref{item:1INthm:1InOrOut} preserves the degree of $ \xi $.
\end{proof}

\section{Connected Components of the Flip Graph}

In this section, we successively prove that no connected component of the flip graph $ \FlipGraph { S } $ has only $ 1 $ or $ 2 $ vertices.
Informally, this properties may be seen as failed attempts to disprove Conjecture~\ref{cjt:main}.

\begin{lemma}
    \label{lem:noConnectedComponentSize1}
    If $ S $ contains at least $ 3 $ points, then $ \FlipGraph { S } $ has no isolated vertex.
\end{lemma}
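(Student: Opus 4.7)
The plan is to extend $P$ to a triangulation of $S$ and read off an explicit flip from an edge of the triangulation incident to the endpoint $p_1$ of $P$.

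Let $P = p_1, p_2, \ldots, p_n$ be an arbitrary plane spanning path of $S$. First, I would extend the edge set of $P$ to a triangulation $T$ of $S$. This is a standard extension: since $P$ is a plane straight-line graph on $S$ (which is in general position), iteratively adding crossing-free edges until none can be added produces a triangulation with $P \subseteq T$.

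Next, I would observe that the endpoint $p_1$ has degree at least $2$ in $T$. Indeed, any vertex of a triangulation is a corner of at least one triangle---or, if it lies on the convex hull of $S$, carries its two hull edges---contributing at least two incident edges. Since $p_1 p_2 \in P \subseteq T$, there must exist an edge $p_1 p_j \in T$ with $j \neq 2$, so $j \in \{3, \ldots, n\}$. Because $T$ is plane and $P \subseteq T$, this edge $p_1 p_j$ does not properly cross any edge of $P$.

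The flip itself is then immediate: remove $p_{j-1} p_j$ from $P$ and insert $p_1 p_j$. The two subpaths $p_1, \ldots, p_{j-1}$ and $p_j, \ldots, p_n$ obtained by removing $p_{j-1} p_j$ are reconnected by $p_1 p_j$ into the spanning path $p_{j-1}, p_{j-2}, \ldots, p_1, p_j, p_{j+1}, \ldots, p_n$ (collapsing to $p_{n-1}, \ldots, p_1, p_n$ in the special case $j = n$). All of its edges lie in the plane graph $T$, so it is a plane spanning path differing from $P$ by exactly one segment swap, exhibiting a neighbor of $P$ in $\FlipGraph{S}$.

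There is no substantial obstacle here; the argument reduces to the standard facts that a plane straight-line graph extends to a triangulation and that every vertex of a triangulation has degree at least $2$ as soon as $|S| \geq 3$. The only point of mild care is to verify that the swap indeed produces a spanning path, which is automatic because $p_1$ and $p_j$ are the relevant endpoints of the two subpaths created by removing $p_{j-1} p_j$.
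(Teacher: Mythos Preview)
Your proof is correct and arrives at exactly the same flip as the paper---remove $p_{j-1}p_j$, insert $p_1p_j$---but the way you locate the vertex $p_j$ is genuinely different. The paper appeals to a visibility lemma (Lemma~\ref{lem:see2endpoints}) asserting that any point off a non-crossing segment arrangement sees at least two of its endpoints; applied from $p_1$ to $P\setminus\{p_1p_2\}$, this yields a visible $p_i\neq p_2$. You instead extend $P$ to a full triangulation $T$ and read off a second edge at $p_1$ from the fact that every vertex of a triangulation on $\ge 3$ points has degree $\ge 2$. Your route leans on a widely known structural fact and avoids the need for the auxiliary Lemmas~\ref{lem:see1endpoint}--\ref{lem:see2endpoints}; the paper's route is more self-contained (those lemmas are proved from scratch in the appendix) and keeps the argument purely in terms of visibility, which also feeds directly into the proof of Lemma~\ref{lem:noConnectedComponentSize2}. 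Either way the final flip and the verification that the result is a plane spanning path are identical.
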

\begin{proof}
    Let $ S $ be any set of at least $ 3 $ points in general position.
    Let $ P = p _ 1 , \dots , p _ n $ be an arbitrary plane path.
    Let $ p _ i $ be a point of the plane path such that the segment $ p _ 1 p _ i $ do not cross the plane path $ P $.
    By Lemma~\ref{lem:see2endpoints}, there exists at least two endpoints $ p $ of segments in $ P $ such that the open segment $ p _ 1 p $ does not intersect any closed segment of $ P \setminus \{ p _ 1 p _ 2 \} $. The point $ p = p _ 2 $ being one of them, there exists at least one point $ p _ i \in S \setminus \{ p _ 1 , p _ 2 \} $ such that the open segment $ p _ 1 p _ i $ does not intersect any closed segment of $ P $ (thanks to the $ S $ being in general position).
    Thus, there exists a flip that replaces one of the two segments incident to $ p _ i $ by the segment $ p _ 1 p _ i $ (see Figure~\ref{fig:lem iso vertex}).
    More specifically, the removed segment is $ p _ { i - 1 } p _ i $ and the resulting path is $ P ' = p _ { i - 1 } , p _ { i - 2 } , \dots , p _ 1 , p _ i , p _ { i + 1 } , \dots , p _ n $.
    Therefore, $ \FlipGraph { S } $ has no isolated vertex.
\end{proof}

\begin{figure}[htb]
    \centering
    \begin{tabularx}{\textwidth}{C}
        \includegraphics[scale=\graphicsScale]{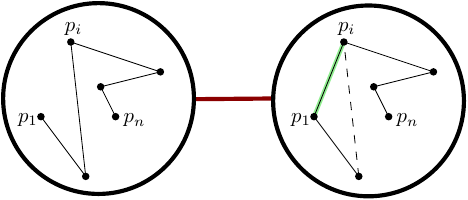}
    \end{tabularx}
 \caption{Illustration of the proof of Lemma~\ref{lem:noConnectedComponentSize1}.}
 \label{fig:lem iso vertex}
\end{figure}

\begin{lemma}
    \label{lem:noConnectedComponentSize2}
    If $ S $ contains at least $ 3 $ points, then $ \FlipGraph { S } $ has no connected component of $ 2 $ vertices.
\end{lemma}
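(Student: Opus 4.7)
The approach is to sharpen the argument of Lemma~\ref{lem:noConnectedComponentSize1} to produce two \emph{distinct} flip neighbors of any plane spanning path, which immediately forbids connected components of size $2$.

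Given $P = p_1, \dots, p_n$ with $n \geq 3$, I plan to apply the construction in the proof of Lemma~\ref{lem:noConnectedComponentSize1} at both extremities of $P$. Applied at $p_1$, it produces a flip $F_1$ that removes $p_{i-1} p_i$ and inserts $p_1 p_i$ for some $i \in \{3, \dots, n\}$. Applied at $p_n$ (equivalently, to the path traversed in reverse order), it produces a flip $F_2$ that removes $p_k p_{k+1}$ and inserts $p_k p_n$ for some $k \in \{1, \dots, n-2\}$. Both $F_1$ and $F_2$ yield genuine flip neighbors of $P$ in $\FlipGraph{S}$, by exactly the reasoning already used in Lemma~\ref{lem:noConnectedComponentSize1}; in particular Lemma~\ref{lem:see2endpoints} is applied once at each extremity.

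The key (and essentially only) step is to verify that $F_1 \neq F_2$, which is a short index check. The inserted edges can coincide, $\{p_1, p_i\} = \{p_k, p_n\}$, only when $k = 1$ and $i = n$; but in that case the removed edges are $p_{n-1} p_n$ and $p_1 p_2$, which are distinct precisely because $n \geq 3$. In every other situation the inserted edges themselves already differ, so again $F_1 \neq F_2$. Consequently $P$ has at least two distinct neighbors in $\FlipGraph{S}$, and its connected component contains at least three vertices, which is the desired conclusion. I do not expect any further geometric subtlety: the nontrivial content (the existence of each individual flip) is already packaged inside Lemma~\ref{lem:see2endpoints}, and the hypothesis $n \geq 3$ is used only in the final index comparison.
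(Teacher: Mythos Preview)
Your proposal is correct and follows essentially the same approach as the paper: apply the flip from Lemma~\ref{lem:noConnectedComponentSize1} at each extremity of $P$ and observe that the two resulting neighbors are distinct. The paper verifies distinctness by comparing the endpoint sets of $P$, $P'$, $P''$ (using $p_{i-1}\neq p_1$ and $p_{j+1}\neq p_n$), whereas you compare the inserted/removed edge pairs directly; these are equivalent checks.
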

\begin{proof}
    Let $ S $ be any set of at least $ 3 $ points in general position.
    Let $ P $ be an arbitrary plane path of $ S $.
    We perform the same flip as in the proof of Lemma~\ref{lem:noConnectedComponentSize1} for each extremity of $ P $.
    We obtain the paths $ P ' = p _ { i - 1 } , p _ { i - 2 } , \dots , p _ 1 , p _ i , p _ { i + 1 } , \dots , p _ n $ and $ P '' = p _ 1 , p _ 2 , \dots , p _ j , p _ n , p _ { n - 1 } , \dots , p _ { j + 1 } $.
    The paths $ P , P ' , P '' $ are pairwise distinct because $ p _ { i - 1 } \neq p _ 1 $ and $ p _ { j + 1 } \neq p _ n $.
    Therefore, the connected component of $ P $ has at least $ 3 $ vertices.
\end{proof}

\bibliography{ref}

\appendix

\section{Technical Lemma}

In this section, we recall the statement and the proof an important technical lemma from~\cite{On}.
This lemma is used in the proof of Theorem~\ref{thm:convex}~\cite{On}, in~\cite{Flipping}, and in the present article to prove Lemma~\ref{lem:main}.
We add extra technical details to the statement which are used in the proof Lemma~\ref{lem:main}.

\begin{lemma}[Lemma~3.1 in~\cite{On}]
    \label{lem:convex}
    For any set $ S $ of $ n \geq 3 $ points in convex position and any $ P = p _1, \dots, p _ n  $ in $ \Vertices {  \FlipGraph { S } } $, there exists a flip path of length at most $ n - 3 $ in $ \FlipGraph { S } $ connecting $ P $ to a plane path $ P ' $ contained in the convex hull boundary of $ S $ and such that $ P $ and $ P ' $ have a common extremity $ p _ n $ and a common segment $ p _ { n - 1 } p _ n $ incident to this endpoint.

    Moreover, every flip of this flip path does not remove any edge of the convex hull of $ S $.
\end{lemma}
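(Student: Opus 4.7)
I plan to prove the lemma by induction on $n$, reducing from $n$ to $n-1$ points by deleting $p_n$ and recursing on the sub-path $P^* = p_1, \dots, p_{n-1}$ over the smaller convex set $S^* = S \setminus \{p_n\}$. The base case $n = 3$ is immediate because every plane spanning path on three convex points is already a hull path, so $P' = P$ suffices with zero flips.

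A preliminary observation I would first establish is that in any plane spanning path on a convex point set, the edge incident to an endpoint must be a hull edge: if $p_{n-1} p_n$ were a diagonal, it would partition the remaining hull vertices into two nonempty arcs, yet planarity forbids the rest of the path from visiting vertices on both arcs without crossing $p_{n-1} p_n$. This guarantees the target path $P'$ is well defined (the unique hull path of $S$ ending at $p_n$ via $p_{n-1} p_n$) and also that $p_{n-2} p_{n-1}$ is a hull edge of $S^*$, so the induction hypothesis applies to $P^*$ and yields a flip path of length at most $n - 4$ in $\FlipGraph{S^*}$ from $P^*$ to the unique hull path $P'^*$ of $S^*$ ending at $p_{n-1}$ via $p_{n-2} p_{n-1}$.

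The lifting to $\FlipGraph{S}$ is obtained by appending $p_{n-1} p_n$ to every intermediate path; this is valid because no flip in the recursion touches $p_n$, so $p_n$ remains an endpoint throughout, and because the hull edge $p_{n-1} p_n$ crosses no segment between points of $S^*$. The heart of the argument is a case analysis on $p_{n-2}$, which must coincide with one of the two hull-neighbors of $p_{n-1}$ in $S^*$: either $d$, the hull-neighbor of $p_{n-1}$ in $S$ other than $p_n$, or $c$, the hull-neighbor of $p_n$ in $S$ other than $p_{n-1}$ (which becomes a neighbor of $p_{n-1}$ upon removing $p_n$). If $p_{n-2} = d$, the lifted flip path terminates exactly at $P'$, finishing in at most $n - 4$ flips. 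If $p_{n-2} = c$, the recursion instead produces a path that uses the ``phantom'' edge $c p_{n-1}$ (a hull edge of $S^*$ but only a diagonal of $S$); comparing with $P'$ using the uniqueness of both hull paths shows that they differ in exactly one edge (swapping $c p_{n-1}$ for $d p_{n-1}$), and a single correcting flip completes the construction in at most $n - 3$ flips total.

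The Moreover clause follows because hull edges of $S$ not incident to $p_n$ coincide with hull edges of $S^*$ and are preserved by induction, hull edges of $S$ incident to $p_n$ are never removed (since $p_n$ carries only the edge $p_{n-1} p_n$ throughout), and the final correcting flip removes only the non-hull edge $c p_{n-1}$. The main obstacle I anticipate is precisely this second sub-case: the recursion may naturally leave the path ending with the phantom edge $c p_{n-1}$, and one must verify that the correcting flip is always valid (it adds the hull edge $d p_{n-1}$, which crosses nothing, and removes no hull edge of $S$) and that the resulting bound $n - 3 = (n - 4) + 1$ matches exactly, leaving no slack.
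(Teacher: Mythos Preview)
Your inductive argument is correct and reaches the same $n-3$ bound, but it proceeds quite differently from the paper. The paper gives a two-line iterative procedure working from the \emph{left} endpoint: repeatedly take the smallest index $i$ with $p_i p_{i+1}$ not a hull edge and replace that segment by $p_1 p_{i+1}$; because every prefix $\{p_1,\dots,p_k\}$ of a plane path on a convex set is a contiguous hull arc, the inserted edge $p_1 p_{i+1}$ is necessarily a hull edge, so each flip strictly decreases the number of non-hull edges (initially at most $n-3$, since $p_1p_2$ and $p_{n-1}p_n$ are hull edges). You instead peel off $p_n$ and recurse on $S^{*} = S\setminus\{p_n\}$, then lift the flip sequence back by re-appending $p_{n-1}p_n$, and in the sub-case $p_{n-2}=c$ repair the single ``phantom'' hull edge $c\,p_{n-1}$ with one extra flip. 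The paper's route is shorter and avoids both the case split and the lifting argument; your route is more modular and makes the preservation of the suffix $p_{n-1}p_n$ and of all hull edges of $S$ structurally transparent (they are literally untouched by the recursion), rather than an invariant to be checked along an iteration.
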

\begin{proof}
    We iteratively perform the following flip, preserving the segment $ p _ { n - 1 } p _ n $ and the fact that $ p _ n $ is an extremity of $ P $.
    Let $ i $ be the smallest index such that the segment $ p _ i p _{ i + 1 } $ is not on the convex hull boundary of the point set.
    Replace the segment $ p _ i p _ { i + 1 } $ with the segment $ p _ 1 p _ { i + 1 } $.
\end{proof}

\section{Geometric Lemmas}

This section gathers the geometric lemmas we use.
Note that we do not assume general position.
Before stating Lemma~\ref{lem:see1endpoint}, we give a definition.
Let $ P $ be a set of subsets of the plane, and $ \mya , b $ be two points. 
We say that $ \mya $ and $ b $ \emph{see} each other if the open segment $ \mya b $ does not intersect $ \bigcup P $.

\begin{lemma}
    \label{lem:see1endpoint}
    Let $ P $ be a non-empty finite set of closed segments.
    If any two segments of $ P $ are either disjoint or intersect at a common endpoint, then any point $ q $ not in $ \bigcup P $ sees at least $ 1 $ endpoint of some segment in $ P $.
\end{lemma}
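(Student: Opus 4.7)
The plan is to start from the point $x \in \bigcup P$ closest to $q$ and walk along a segment of $P$ to reach a visible endpoint. Since $\bigcup P$ is closed and $q \notin \bigcup P$, such a closest point $x$ exists, and $q$ sees $x$: otherwise the open segment $qx$ would contain a point of $\bigcup P$ strictly closer to $q$ than $x$, contradicting the choice of $x$. If $x$ is an endpoint of some segment of $P$, we are done.

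Otherwise $x$ lies in the relative interior of some segment $s \in P$ with endpoints $a$ and $b$, and $qx$ is perpendicular to $s$. I consider the closed sub-segment $\overline{xa}$ of $s$ and let $y^*$ be the supremum of $\{y \in \overline{xa} : q \text{ sees } y\}$ (parameterized by distance from $x$). I claim that either $y^* = a$, or $q$ does not see $y^*$. Indeed, suppose $q$ sees $y^*$ and $y^* \neq a$; then $y^*$ is interior to $s$. If the line through $qy^*$ coincided with the line through $s$, then $q$ would lie on both this line and its perpendicular through $x$, forcing $q = x$, a contradiction. So $s$ crosses $qy^*$ transversally at $y^*$, and a compactness argument (the closed segment $qy^*$ meets $\bigcup P$ only at $y^*$, and $qy$ varies continuously in $y$) extends visibility slightly past $y^*$, contradicting maximality. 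Hence either $a$ is a visible endpoint, or $q$ does not see $y^*$.

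The hard part is the remaining case where $q$ does not see $y^*$. Then the open segment $qy^*$ meets $\bigcup P$; let $z$ be the meeting point closest to $q$, lying on some segment $s_z \in P$. I want to show that $z$ is an endpoint, which would give that $q$ sees $z$ by the minimality of $|qz|$ on $qy^*$. First, $s_z \neq s$: otherwise $z$ and $y^*$ would both lie on $s$, so the line through $qy^*$ would coincide with the line through $s$, again forcing $q = x$. If $z$ were interior to $s_z$, then either $s_z$ crosses $qy^*$ transversally at $z$ (in which case continuity would make $qy$ also cross $s_z$ for $y$ slightly below $y^*$, contradicting visibility there); or $s_z$ is collinear with the line through $qy^*$, in which case, since $z$ is interior to $s_z$, one endpoint of $s_z$ lies strictly on the $q$-side of $z$, and this endpoint cannot lie between $q$ and $z$ (contradicting the minimality of $z$), nor at $q$ (contradicting $q \notin \bigcup P$), nor beyond $q$ on the far side (which would place $q$ in the interior of $s_z$, again contradicting $q \notin \bigcup P$). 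In every case we reach a contradiction, so $z$ must be an endpoint, concluding the proof.
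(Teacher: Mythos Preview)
Your argument is correct. The one step that is phrased a bit tersely is the ``compactness argument'' extending visibility past $y^*$: to make it airtight you should note that the open segment $qy$ never meets $s$ (since $q$ is off the line of $s$ and $y\in s$, the segment meets that line only at $y$), and that the closed segment $qy^*$ is disjoint from $\bigcup(P\setminus\{s\})$ because $y^*$ is interior to $s$ and hence, by the non-crossing hypothesis, lies on no other segment. With those two observations the compactness step goes through exactly as you indicate.

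Your route is genuinely different from the paper's. The paper works globally: it lets $P'$ be the set of maximal visible sub-segments of the segments in $P$ and $E'$ the set of their endpoints, observes that any point of $(\bigcup P')\cap E'$ must already be an endpoint in $E$ (otherwise two segments of $P$ would cross at a non-endpoint), and then shows this intersection is nonempty by taking the point of the closed set $E'$ closest to $q$ and checking it lies in $\bigcup P'$. Your proof instead starts from the nearest point of $\bigcup P$, walks along a single segment, and analyses the first obstruction via transversality and collinearity cases. The paper's argument is shorter and avoids the perpendicularity and case analysis entirely; yours is more hands-on and perhaps easier to picture geometrically. Both hinge on the same underlying principle that a ``boundary of visibility'' point forced to lie in the interior of a segment would witness a forbidden crossing.
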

\begin{proof}
    First, we define some sets; these definitions are summarized in Figure~\ref{fig:see1endpoint}(a).
    Let $ E $ be the set of endpoints of the segments in $ P $.
    Let $ P ' $ be the set of sub-segments of $ P $ seen by $ q $ and
    $ E ' $ be the set of endpoints of the segments in $ P ' $.
    Note that a segment in $ P ' $ is not necessarily a closed segment, and thus that a point in $ E ' $ is not necessarily in $ \bigcup P ' $.
    
    Now, we have the inclusion $ ( \bigcup P ' ) \cap E ' \subseteq E $ because the existence of a point in $ ( ( \bigcup P ' ) \cap E ' ) \setminus E $ contradicts that any intersection of two segments of $ P $ is either empty or a common endpoint.
    Thus, to prove Lemma~\ref{lem:see1endpoint}, it is enough to show that $ ( \bigcup P ' ) \cap E ' $ is not empty.
    Next, we define a point $ p $ in $ E ' $, then prove that this point $ p $ is also in $ \bigcup P ' $.
    
    Since $ P $ is not empty, $ \bigcup P ' $, $ P ' $, and $ E ' $, are also not empty.
    Thus, there exists a point $ p $ in the non-empty closed set $ E ' $ which is the closest to $ q $.
    
    By definition of $ P ' $, the following implication holds.
    If a point $ q ' \in E ' $ is not in $ \bigcup P ' $, then the segment $ q q ' $ contains a point in $ E ' $ (in fact in $ ( \bigcup P ' ) \cap E ' $).
    The contraposition of the previous implication shows that $ p $ is in $ \bigcup P ' $.

    As a conclusion, $ p \in ( \bigcup P ' ) \cap E ' \subseteq E $. In particular, $ p $ is an endpoint of a segment in $ P $ which is seen by $ q $.
\end{proof}

\begin{figure}[htb]
    \centering
    \begin{tabularx}{\textwidth}{CC}
        \includegraphics[scale=\graphicsScale,page=1]{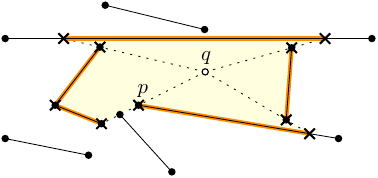}&%
        \includegraphics[scale=\graphicsScale,page=2]{graphics/see1endpoint.pdf}\\
        (a) & (b)%
    \end{tabularx}
    \caption{%
        (a) Illustration of the proof of Lemma~\ref{lem:see1endpoint}. The point $ q $ is drawn with a hollow disc. The segments in $ P $ are drawn plain and thin. The endpoints in $ E $ are drawn with filled discs. The segments in $ P ' $ are highlighted. The endpoints in $ E ' $ are drawn with crosses.
        (b) Illustration of the proof of Lemma~\ref{lem:see2endpoints}. The set $ H $ is shaded. The segments in $ P ' $ are bold.}
    \label{fig:see1endpoint}
\end{figure}

\begin{lemma}
    \label{lem:see2endpoints}
    Let $ P $ be a non-empty finite set of closed segments.
    If any two segments of $ P $ are either disjoint or intersect at a common endpoint, then any point $ q $ not in $ \bigcup P $ sees at least $ 2 $ endpoints of some segments in $ P $.
\end{lemma}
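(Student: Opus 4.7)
Proof plan.

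The plan is to apply Lemma~\ref{lem:see1endpoint} twice: first to $P$ itself in order to locate a first visible endpoint $p$, and then to a suitable restriction of $P$ to one of the two closed half-planes bounded by the line through $q$ and $p$ in order to locate a second visible endpoint distinct from $p$.

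First, Lemma~\ref{lem:see1endpoint} applied to $P$ yields an endpoint $p$ of some segment of $P$ that $q$ sees. Let $\ell$ denote the line through $q$ and $p$, and let $H$ be one of the two closed half-planes bounded by $\ell$, chosen so that the segment of $P$ incident to $p$ is contained in $H$ (such a choice exists because a single closed segment with an endpoint on $\ell$ lies in at least one closed half-plane bounded by $\ell$). I would then form the finite set $P'$ consisting of every non-degenerate closed segment of the form $s\cap H$ for $s\in P$. A short verification shows that $P'$ inherits the hypothesis of Lemma~\ref{lem:see1endpoint}: restrictions of two disjoint segments of $P$ are disjoint, and restrictions of two segments of $P$ sharing a common endpoint $e$ intersect at most at $\{e\}$, which, when $e\in H$, remains a common endpoint of both restrictions.

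Applying Lemma~\ref{lem:see1endpoint} to $P'$ produces a point $p'$ that is an endpoint of some element of $P'$ and that $q$ sees with respect to $P'$. The point $p'$ is automatically seen by $q$ with respect to $P$ as well, since the open segment $qp'$ lies in $\overline{H}$ (as $q\in\ell$ and $p'\in\overline{H}$), and meets $\bigcup P$ only within $(\bigcup P)\cap \overline{H}=\bigcup P'$, where by construction it meets nothing.

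The main obstacle is to ensure that $p'$ is an endpoint of an \emph{original} segment of $P$ (and not merely a point where some segment of $P$ crosses $\ell$ into $H$) and that $p'\neq p$. I would handle this by further restricting $P'$ to exclude every element having $p$ as one of its endpoints before applying Lemma~\ref{lem:see1endpoint}, so that the output $p'$ is forced to be different from $p$. A short case analysis on whether $p'$ lies strictly in $H$ or on $\ell$ then identifies $p'$ with an original endpoint of $P$: endpoints of $P'$ lying strictly in $H\setminus \ell$ are automatically endpoints of original segments, while the case $p'\in\ell\setminus\{p\}$ is handled by arguing that the segment $s$ of $P$ whose crossing with $\ell$ equals $p'$ has an endpoint in $H$ which is itself visible from $q$ by the same half-plane argument, yielding the required second visible endpoint.
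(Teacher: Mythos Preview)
Your overall strategy—apply Lemma~\ref{lem:see1endpoint} twice, the second time to a restricted family—is the same as the paper's, but the restriction you choose does not work. The paper keeps only those segments of $P\setminus\{p_1p_2\}$ lying entirely outside the \emph{shadow} of the full segment $p_1p_2$ (the set of points whose open segment to $q$ meets $p_1p_2$), whereas you clip every segment to one closed half-plane $H$ bounded by the line $\ell=qp$ and then discard every clipped piece having $p$ as an endpoint.

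The discarding step is where the argument breaks. Once the pieces through $p$ are removed, the identity $(\bigcup P)\cap H=\bigcup P'$ that your visibility transfer relies on is no longer true: the discarded segments are still contained in $H$, still belong to $\bigcup P$, and can block the open segment $qp'$. Concretely, take $q=(0,0)$, $p=(1,0)$, a segment $s_1$ from $p$ to $(2,2)$, and a disjoint segment $s_3$ from $(2,\tfrac12)$ to $(2,1)$. Then $q$ sees $p$, $\ell$ is the $x$-axis, and $H$ is the closed upper half-plane containing $s_1$. After discarding $s_1$ you are left with $P'=\{s_3\}$; Lemma~\ref{lem:see1endpoint} applied to $P'$ returns an endpoint of $s_3$, but both endpoints of $s_3$ are hidden from $q$ by $s_1$ in the original configuration. (A related problem appears even earlier when two segments of $P$ meet at $p$ on opposite sides of $\ell$: your phrase ``the segment of $P$ incident to $p$'' then has no half-plane containing it.) The paper's shadow-based $P'$ sidesteps all of this because any segment that can block a sight line from $q$ to a point outside the shadow must itself stick out of the shadow and hence already lies in $P'$; the case $P'=\emptyset$ is handled separately by observing that then the second endpoint $p_2$ of the first segment is visible.
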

\begin{proof}
    By Lemma~\ref{lem:see1endpoint}, there exists a segment $ p _ 1 p _ 2 $ in $ P $ such that at least one of its endpoints $ p _ 1 , p _ 2 $, say $ p _ 1 $, is seen by $ q $.
    Let $ H $ be set of points $ p $ such that the open segment $ p q $ intersects the closed segment $ p _ 1 p _ 2 $ (informally, $ H $ is the set of points ``hidden'' from $ q $ by the segment $ p _ 1 p _ 2 $).
    Let $ P ' = \{ s \in P \setminus \{ p _ 1 p _ 2 \} : s \cap H = \emptyset \} $.
    Figure~\ref{fig:see1endpoint}(b) illustrates the sets $ H $ and $ P ' $
    There are two cases.
    \begin{Cases}
        \item If $ P ' $ is empty, then $ p _ 2 $ is also seen by $ q $.
        \item Otherwise, Lemma~\ref{lem:see1endpoint} applied to $ P ' $ and $ q $ ensures that there exists an endpoint of a segment in $ P ' $ which is seen by $ q $.\qedhere
    \end{Cases}
\end{proof}

\begin{lemma}
    \label{lem:see3endpoints}
    Let $ P $ be a non-empty finite set of closed segments.
    If any two segments of $ P $ are either disjoint or intersect at a common endpoint, then, for any point $ q $ not in $ \bigcup P $, exactly one of the following holds.
    \begin{thmEnumerate}
        \item \label{item:seeExactly2endpoints} The points in $ \bigcup P $ seen by $ q $ consist of exactly one segment of $ P $ (including its $ 2 $ endpoints).
        \item The points in $ \bigcup P $ seen by $ q $ include at least $ 3 $ endpoints of some segments in $ P $.
    \end{thmEnumerate}
\end{lemma}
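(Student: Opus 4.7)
The two options are mutually exclusive: option (a) produces exactly $2$ visible endpoints, while (b) requires at least $3$. Since Lemma~\ref{lem:see2endpoints} already guarantees at least $2$ visible endpoints, my plan is to assume that $q$ sees exactly $2$ endpoints $p_1, p_2$ and deduce that the visible set $V = \{p \in \bigcup P : q \text{ sees } p\}$ equals a single whole segment of $P$, establishing (a).

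I would analyze $V$ as a finite disjoint union of maximal connected arcs, each contained in a single segment of $P$. Every boundary point of such an arc (within its containing segment) is either (i) an endpoint of that segment, or (ii) a \emph{shadow boundary}: a point $r$ in the interior of a segment $s$ at which the ray from $q$ through $r$ is tangent to some other segment $s^*$ at one of its endpoints $e^*$. In case (ii), after choosing $s^*$ to be the segment closest to $q$ causing the transition at $r$, the endpoint $e^*$ lies strictly between $q$ and $r$ on the ray and is itself visible from $q$, because no point of $\bigcup P$ can lie strictly closer to $q$ on this ray without contradicting the minimality of $s^*$.

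With this structural observation, the case analysis proceeds under the assumption $|V \cap E|=2$. If any visible arc is a proper sub-arc of its segment, then at least one of its boundary points is a shadow boundary, and the associated $e^*$ is a visible endpoint distinct from the endpoints of that segment; after ruling out the coincidences $e^* \in \{p_1, p_2\}$ this yields a third visible endpoint. If $V$ contains visible arcs in two or more distinct segments of $P$, the corresponding segment endpoints and shadow-boundary endpoints likewise combine to force $|V \cap E|\geq 3$. In both situations I reach a contradiction with $|V \cap E|=2$, so the only possibility left is that $V$ is a single visible arc coinciding with an entire segment of $P$; this segment must then be $p_1p_2 \in P$, yielding option~(a).

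The main obstacle I foresee is carefully ruling out the degenerate coincidences in the case analysis: a shadow-casting endpoint $e^*$ could equal $p_1$ or $p_2$, or an obstacle $s^*$ could share an endpoint with the segment it shadows. Since the lemma does not assume general position, collinearities of $q$ with several segment endpoints are permitted and must be resolved by a careful local analysis of the ray directions involved, possibly by re-applying Lemma~\ref{lem:see1endpoint} or Lemma~\ref{lem:see2endpoints} to a suitably reduced subset of $P$ obtained by removing a few segments incident to $p_1$ and $p_2$.
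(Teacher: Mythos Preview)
Your outline is structurally different from the paper's argument, and the obstacle you flag at the end is exactly where the sketch is incomplete. The coincidence $e^{*}\in\{p_1,p_2\}$ is not a degeneracy you can simply rule out: it genuinely occurs whenever every shadow on the visible arcs is cast by segments incident to $p_1$ or $p_2$, and in that situation your shadow-boundary step produces no third visible endpoint. Your fallback --- deleting a few segments through $p_1,p_2$ and re-applying Lemma~\ref{lem:see1endpoint} or Lemma~\ref{lem:see2endpoints} --- comes with no argument that visibility with respect to the reduced family implies visibility with respect to all of $P$; that implication needs geometric control over what you removed, which the proposal does not supply. So as written the argument does not close.

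The paper avoids this difficulty by a different decomposition. It fixes one visible endpoint $p_1$ on a segment $p_1p_2\in P$ (via Lemma~\ref{lem:see1endpoint}), lets $H$ be the shadow of the closed segment $p_1p_2$ from $q$, and sets $P'=\{s\in P\setminus\{p_1p_2\}:s\cap H=\emptyset\}$. If $P'\neq\emptyset$, Lemma~\ref{lem:see2endpoints} applied to $P'$ already yields two further visible endpoints; the role of $H$ is precisely to guarantee that nothing in $P\setminus P'$ can block them. If $P'=\emptyset$, then $p_2$ is visible as well, and the paper locates a third visible endpoint by walking along the two boundary rays of $H$ to the nearest crossing segment of $P$ and arguing that one of that segment's endpoints is unobstructed. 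Your shadow-boundary idea is essentially this second case carried out locally; what you are missing is the global $H$/$P'$ partition that makes the reduction to the earlier lemmas legitimate.
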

\begin{proof}
    Assume that~\ref{item:seeExactly2endpoints} does not hold.
    By Lemma~\ref{lem:see1endpoint}, there exists a segment $ p _ 1 p _ 2 $ in $ P $ such that at least one of its endpoints $ p _ 1 , p _ 2 $, say $ p _ 1 $, is seen by $ q $.
    Let $ H $ be set of points $ p $ such that the closed segment $ p _ 1 p _ 2 $ intersects the open segment $ p q $.
    Let $ P ' = \{ s \in P \setminus \{ p _ 1 p _ 2 \} : s \cap H = \emptyset \} $.
    Similarly to the proof of Lemma~\ref{lem:see2endpoints}, there are two cases.
    \begin{Cases}
        \item If $ P ' $ is empty, then $ p _ 2 $ is also seen by $ q $, and we exhibit a third point seen by $ q $.
        Let $ P '' = \{ s \in P \setminus \{ p _ 1 p _ 2 \} : s \not \subseteq H \} $.
        The assumption that the points in $ \bigcup P $ seen by $ q $ do not consist of exactly one segment of $ P $ implies that $ P '' $ is not empty.
        
        Let $ r _ 1 $ (respectively $ r _ 2 $) be the ray included in the line $ q p _ 1 $ (respectively $ q p _ 2 $) which starts from $ p _ 1 $ (respectively $ p _ 2 $) and does not contain $ q $.
        Let $ P _ 1 $ (respectively $ P _ 2 $) be the set of segments in $ P '' $ which intersect the ray $ r _ 1 $ (respectively $ r _ 2 $).
        As $ P _ 1 \cup P _ 2 = P '' $, at least one of $ P _ 1 , P _ 2 $, say $ P _ 1 $, is not empty.
        Let $ p _ 3 p _ 4 $ be the segment in $ P _ 1 $ whose intersection $ i $ with $ r _ 1 $ is the closest to $ q $.
        At least one of $ p _ 3 , p _ 4 $, say $ p _ 3 $, is not in $ H $.
        
        Since the open segments $ q i $ and $ i p _ 3 $ do not intersect any segment in $ P \setminus \{ p _ 1 p _ 2 \} $, any segment $ s $ in $ P $ intersecting $ q p _ 3 $ has one of its endpoint in the triangle $ q i p _ 3 $ and the other endpoint in the half-plane defined by the line $ q p _ 3 $ and not containing $ p _ 1 $.
        Such a segment $ s $ would therefore be in $ P ' $ which is empty, proving that $ p _ 3 $ is seen by $ q $.
        
        \item Otherwise, Lemma~\ref{lem:see2endpoints} applied to $ P ' $ and $ q $ ensures that there exist $ 2 $ endpoints of some segment in $ P ' $ which is seen by $ q $.\qedhere
    \end{Cases}
\end{proof}

\end{document}